\numberwithin{equation}{section}
\newtheorem{thm}{Theorem}[section]
\newtheorem{lemma}[thm]{Lemma}
\newtheorem{cor}[thm]{Corollary}
\newtheorem{rem}[thm]{Remark}
\newcommand{\C}{\mathbb{C}}
\newcommand{\R}{\mathbb{R}}
\newcommand{\be}{\begin{equation}}
\newcommand{\ee}{\end{equation}}
\newcommand{\ba}{\begin{align}}
\newcommand{\ea}{\end{align}}
\newcommand{\ben}{\begin{equation*}}
\newcommand{\een}{\end{equation*}} 
\newcommand{\bt}{\begin{thm}}
\newcommand{\et}{\end{thm}}
\newcommand{\bl}{\begin{lem}}
\newcommand{\el}{\end{lem}}
\newcommand{\bc}{\begin{cor}}
\newcommand{\ec}{\end{cor}}
\newcommand{\br}{\begin{rem}}
\newcommand{\er}{\end{rem}}
\newcommand{\e}{\varepsilon}
\newcommand{\thh}{^\text{th}}
\begin{document}

\title{Multiplane gravitational lenses with an abundance of images}

\author{Charles R. Keeton}
\affiliation{
Department of Physics and Astronomy, Rutgers University, 136 Frelinghuysen Road, Piscataway, NJ 08854
}

\author{Erik Lundberg}
\affiliation{
Department of Mathematical Sciences, Florida Atlantic University, \\
777 Glades Rd., Boca Raton, FL 33431\\
elundber@fau.edu (email address of corresponding author)
}

\author{Sean Perry}
\affiliation{Department of Mathematics \& Statistics, University of South Florida, \\
4202 E Fowler Ave, Tampa, FL 33620
}

\date{\today}

\begin{abstract}
We consider gravitational lensing of a background source by a finite system of point-masses.
The problem of determining the maximum possible number of lensed images has been completely resolved in the single-plane setting (where the point masses all reside in a single lens plane), but this problem remains open in the multiplane setting.  
We construct examples of $K$-plane point-mass gravitational lens ensembles that produce 
$\prod_{i=1}^K (5g_i-5)$
images of a single background source, 
where $g_i$ is the number of point masses in the $i^\text{th}$ plane.  This gives asymptotically (for large $g_i$ with $K$ fixed) $5^K$ times the minimal number of lensed images. Our construction uses Rhie's single-plane examples and a structured parameter-rescaling algorithm to produce preliminary systems of equations with the desired number of solutions. Utilizing the stability principle from differential topology, we then show that the preliminary (nonphysical) examples can be perturbed to produce physically meaningful examples while preserving the number of solutions.
We provide numerical simulations illustrating the result of our construction, including the positions of lensed images as well as the structure of the critical curves and caustics.
We observe an interesting ``caustic of multiplicity'' phenomenon that occurs in the nonphysical case and has a noticeable effect on the caustic structure in the physically meaningful perturbative case.
\end{abstract}

\maketitle

\section{Introduction}

Gravitational lensing occurs when the gravity due to massive objects acts as a lens, bending light from a background source.  Besides magnifying or distorting light from the source, a gravitational lens can produce multiple images of a single source.  The possibility of multiple images naturally leads to the problem of determining (within a given class of mathematical models) possible numbers of images that may be lensed.  While this problem is simple to state, it has proven to be quite challenging.  In this paper we focus on the particular setting of lensing by point masses.

Einstein noticed \cite{Einstein} that a single point-mass will generically produce two images of a background source, and Schneider and Weiss \cite{SW1986} showed that a pair of point masses residing in a common lens plane (orthogonal to the observer's line of sight) can produce three or five images.

For single-plane lensing by $g$ point masses, Petters established \cite{Pet97} an upper bound that increases quadratically in $g$.  Mao, Petters, and Witt \cite{MPW} conjectured that the maximum actually increases linearly in $g$, and they produced configurations that produce $3g+1$ images. S.\ H.\ Rhie constructed configurations that yield $5g-5$ images \cite{Rhie}, and she conjectured that those examples are extremal.  Khavinson and Neumann \cite{KN} confirmed her conjecture, using an indirect method based on holomorphic dynamics to show that there can be at most $5g-5$ images of a single background source.  In light of Rhie's examples, the Khavinson-Neumann bound is sharp, i.e., $5g-5$ is the \emph{maximum} number of lensed images. We state this result as a theorem.

\begin{thm}[Khavinson, Neumann, Rhie]
\label{thm:single}
For single-plane lensing by $g$ point masses, the maximum number of lensed images is $5g-5$.
\end{thm}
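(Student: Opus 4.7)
The statement has two halves: the upper bound of $5g-5$ images, and the realization of this bound by explicit examples. The upper bound is substantially the harder half, and I would follow the complex-analytic line of Khavinson and Neumann; Rhie's symmetric configurations supply the matching examples. I would begin by rewriting the single-plane lens equation in complex coordinates. For point masses $\sigma_1,\dots,\sigma_g>0$ at positions $z_1,\dots,z_g\in\mathbb{C}$ and source at $w$, the equation takes the harmonic form $\bar z=r(z)$, where
\[
r(z)=\bar w+\sum_{j=1}^{g}\frac{\sigma_j}{z-z_j}
\]
is a rational function of degree $g$. Counting lensed images is therefore equivalent to counting solutions of this equation.

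I would next set up a topological index balance. Let $h(z)=r(z)-\bar z$; its Jacobian at a zero $z_0$ equals $|r'(z_0)|^2-1$, so generic zeros split into \emph{saddles} (orientation-preserving for $h$, with $|r'(z_0)|>1$) and \emph{sinks} (orientation-reversing, with $|r'(z_0)|<1$). Applying the argument principle on a large disk with small disks deleted around the $g$ poles of $r$, and using $h\sim -\bar z$ at infinity together with $h\sim\sigma_j/(z-z_j)$ near each $z_j$, yields the identity $N_s-N_k=g-1$, where $N_s$ and $N_k$ count saddles and sinks respectively.

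The heart of the proof is a bound on the number $N_k$ of sinks via anti-holomorphic dynamics. The map $S(z)=\overline{r(z)}$ has the lensed images as its fixed points, and the sinks are precisely those fixed points at which $S$ is attracting. The composite $S\circ S$ is a rational map of degree $g^2$ whose non-repelling cycles are bounded by $g^2-1$ via Fatou's classical theorem, but this naive bound is far too weak. The main obstacle is sharpening this to
\[
N_k\le 2(g-1),
\]
which I expect to obtain by exploiting the fact that the critical points of $S\circ S$ arise only from the at most $2g-2$ critical points of $r$, combined with a careful dynamical analysis tracking which attracting cycles can absorb which critical orbits. This refinement of Fatou's bound is essentially the main technical work of Khavinson and Neumann.

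Combining the two ingredients gives
\[
N_s+N_k=2N_k+(g-1)\le 4(g-1)+(g-1)=5g-5,
\]
establishing the upper bound. For the matching lower bound I would appeal to Rhie's explicit construction: place $g-1$ equal point masses at the vertices of a regular $(g-1)$-gon around a single central mass, and take the source slightly displaced from the center of symmetry. A direct computation, supplemented by a small perturbation to break degeneracies that occur at perfectly symmetric source positions, produces $5g-5$ images and completes the proof.
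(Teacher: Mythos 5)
The paper does not actually prove Theorem~\ref{thm:single}; it invokes it as a known result, citing Khavinson--Neumann \cite{KN} for the upper bound and Rhie \cite{Rhie} for the matching examples, and only Rhie's construction is reviewed (Section~\ref{sec:Rhie1}, following \cite{Bleher}). Your outline correctly reconstructs the architecture of the Khavinson--Neumann argument: the reduction to $\bar z = r(z)$ with $\deg r = g$, the classification of nondegenerate zeros by the sign of $|r'|^2 - 1$, the index identity $N_s - N_k = g-1$ from the harmonic argument principle (outer winding $-1$ from $h \sim -\bar z$, minus the $g$ inner windings of $-1$ at the poles), and the final arithmetic $N_s + N_k = 2N_k + (g-1) \le 5g-5$.

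The genuine gap is that the inequality $N_k \le 2(g-1)$ --- which is essentially the entire mathematical content of the theorem beyond classical facts --- is asserted rather than proved. The missing idea is the specific mechanism by which the naive Fatou bound (at most $2g^2-2$ non-repelling cycles for the degree-$g^2$ map $Q=S\circ S$) collapses to $2g-2$: one must show that the immediate $Q$-basin of each attracting fixed point $z_0$ of $S$ contains a critical point of $Q$ (Fatou), and then use the commutation $S\circ Q = Q\circ S$ --- so that $S$ carries the basin of $z_0$ into the basin of $S(z_0)=z_0$ --- to replace a critical point lying in $S^{-1}(\mathrm{crit}\,r)$ by its image in $\mathrm{crit}\,r$; since distinct attracting fixed points have disjoint basins and $r$ has at most $2g-2$ critical points, $N_k \le 2g-2$ follows. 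Without that step your proposal bounds nothing. Two smaller issues: the index identity requires first disposing of degenerate zeros with $|r'(z_0)|=1$ (e.g., by perturbing $w$ in a way that cannot decrease the count), and your account of Rhie's example is slightly off --- the source sits exactly at the center, the $(g-1)$-gon radius must be tuned to a specific value, and the perturbation that creates the extra images is the introduction of the small central mass rather than a displacement of the source; verifying that this yields $5g-5$ nondegenerate images is itself a nontrivial computation, for which the paper relies on \cite[Prop.~5.2]{Bleher}.
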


Concerning the minimum number of images for single-plane lensing, Petters \cite{Pet92} used Morse theory to show that there are always at least $g+1$ images, and it is easy to construct examples that attain this lower bound (simply moving the source sufficiently far from the origin in the source plane).

More generally, Petters determined the minimum number of images in the multiplane setting. Let $N$ denote the number of images lensed by a $K$-plane gravitational lens with $g_i$ masses in the $i$th plane.  Petters proved the lower bound $N \geq \prod_{i=1}^K (g_i+1)$, and again this is indeed the minimum as there are examples that have exactly this number of lensed images (again by moving the source far from the origin in the source plane).

On the other hand, as Petters pointed out in his survey paper \cite{Pet10}, there is yet no multiplane analog of Theorem \ref{thm:single}.  This brings us to the following problem that remains open.

\vspace{\baselineskip}
\noindent {\bf Problem.}  Determine the maximum number of images for multiplane lensing, i.e., given positive integers $g_1,g_2,...,g_K$, determine the maximum number of images that can be lensed by a $K$-plane point-mass system with $g_i$ masses in the $i\thh$ plane.
\vspace{\baselineskip}

The third named author of the current paper recently established the upper bound \cite{Perry}
\be\label{eq:Sean}
N \leq E_K^2 + O_K^2
\ee
where $E_K$ and $O_K$ denote the sums of the coefficients of the even and odd degree terms respectively in the formal polynomial  $\prod_{i=1}^K (1+g_i Z)$.
Petters \cite{Pet95} had previously proved the upper bound $N \leq 2(2^{2(K-1)}-1)$ in the special case when there is a single point mass in each of the $K$ lens planes.

The estimate \eqref{eq:Sean} increases quadratically in each $g_i$.   Motivated by the outcome for the single-plane case stated in Theorem \ref{thm:single}, it was asked in \cite{Perry} whether the estimate \eqref{eq:Sean} can be improved to a bound that is linear in each $g_i$.
More specifically, it was asked \cite[Concluding Remarks]{Perry} whether the estimate $N \leq \prod_{i=1}^K (5g_i-5)$ holds when $g_i \geq 2$.


The main goal of the current paper is to construct examples that produce $\prod_{i=1}^K (5g_i-5)$ lensed images.  Hence, if the above bound does hold then it is best possible.

\begin{rem}
Note that the term ``image'' in the context of ``lensed image'' carries the physical meaning referring to what the observer would see (with the aid of a telescope).  This is in direct opposition to mathematical meaning, in fact, the positions of the (physical) images occur at (mathematical) pre-images of the background source under the lensing map (the lensing map is discussed in Section \ref{sec:numerics}).
\end{rem}


Let us now formulate the multiplane lensing model in the form of a system of lensing equations.
We recall from  \cite[pg 199]{PetBook} that the lensed images correspond to solutions $(x_1, ... , x_K) \in \R^{2K}$ of the system of equations
\be\label{eq:systemgen}
\begin{cases}
x_2   &= x_1 -\beta_1 \alpha_1 (x_1)  \\
x_{i+1}  &= x_i + \e_i (x_i - x_{i-1}) - \beta_i \alpha_i(x_i), \quad i=2,3,...,K
\end{cases}
\ee
where $\beta_{i}>0$ and $\e_i>0$ are scaling constants derived from the distances between planes (see the Appendix, Section \ref{sec:param}), $x_{K+1} = y \in \R^2$ is the (fixed) location of the background source in the source plane (orthogonal to the observer's line of sight), and $\alpha_j$ is the bending angle vector of the $j^\text{th}$ plane which can be expressed in terms of position and mass parameters as
\be\label{eq:alpha}
\alpha_j(x_j)=\sum_{\ell=1}^{g_j} b^2_{j,\ell} \frac{x_j-\xi_{j,\ell}}{|x_j-\xi_{j,\ell}|^2},
\ee
where $b_{j,\ell}$ is the Einstein radius of the $\ell\thh$ point mass positioned at $\xi_{j,\ell}$ in the $j\thh$ plane (the square $b_{j,\ell}^2$ of the Einstein radius corresponds to mass).
A solution to system \eqref{eq:systemgen} may be viewed as a list $(x_1,x_2,...,x_K)$ of locations $x_j$ where a light ray, traced backward from observer to source, impacts the $j\thh$ plane. Note that the choice of some $x_1$ determines the value of each $x_j$ where $1\leq j \leq K$. Should any such $x_j = \xi_{j, \ell}$ for any $\ell$, the point $x_1$ is referred to as an \textit{obstruction point}. These correspond to light rays traced back from the observer which impact a lensing mass at some point. They also correspond to those values of $x_i$ which would determine a $x_j$ that caused $\alpha_j(x_j)$ to be undefined (see Figure \ref{fig:Multiplane}). With this framework, we can now state precisely our main result.

\begin{figure}
    \includegraphics[width=1\textwidth]{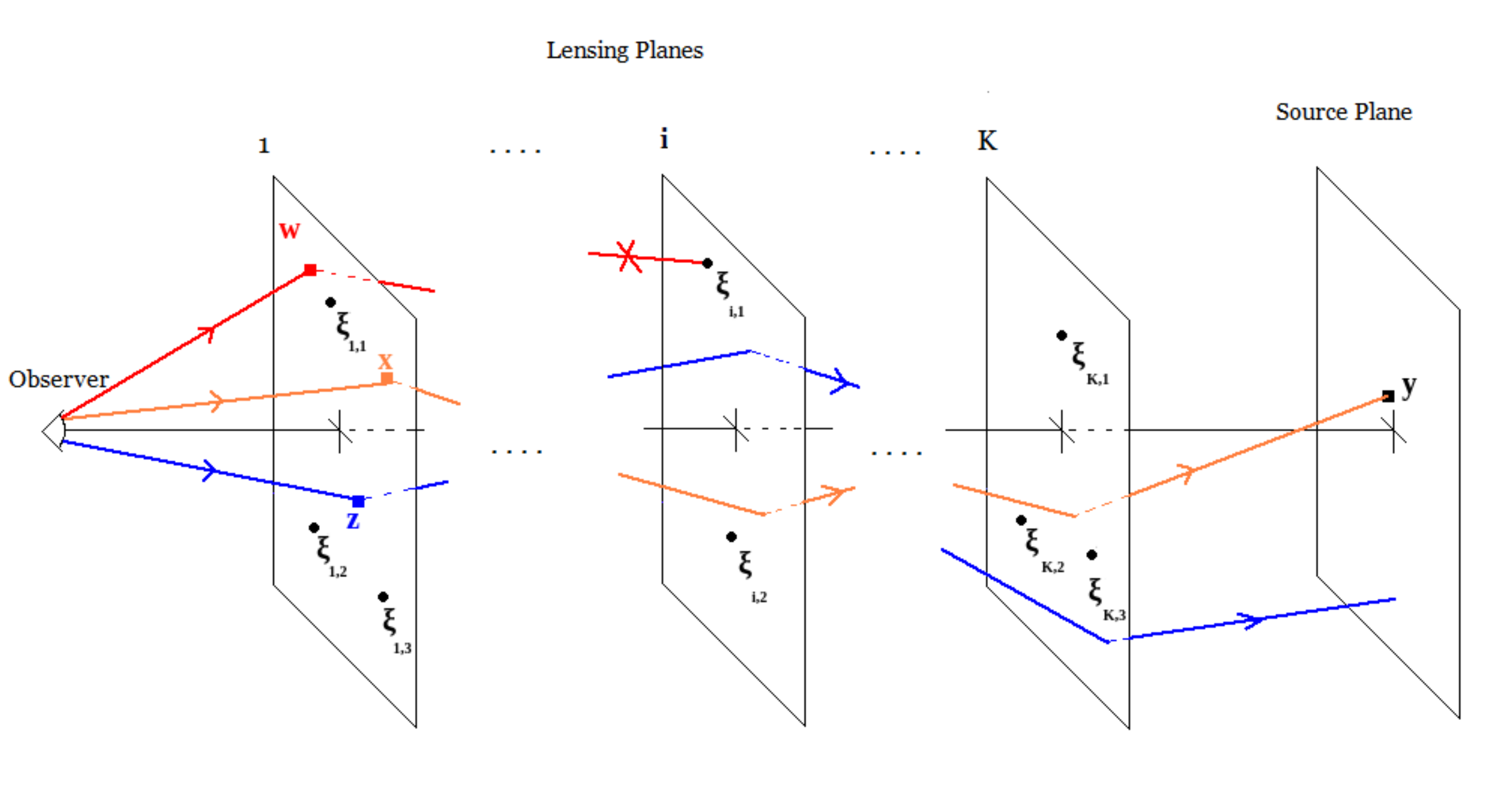}
\caption{Multiplane lensing by point masses.
Here, $x_1 = x$ is a solution to the lensing system, $x_1 = z$ is not, and $x_1 = w$ indicates an obstruction point. The points $\xi_{i,j}$ indicate the $j\thh$ mass on the $i\thh$ plane.}\label{fig:Multiplane}
\end{figure}

\begin{thm}\label{thm:examples}
For each list $g_1,g_2,...,g_K$ of integers $g_i \geq 2$, there exists a choice of parameters for which the system of equations \eqref{eq:systemgen} is nondegenerate and has $\prod_{i=1}^K (5g_i-5)$ solutions.
In other words, the corresponding
$K$-plane gravitational lens, with $g_i$ point masses in the $i^\text{th}$ plane, produces
\be\label{eq:abundance}
\prod_{i=1}^K (5g_i-5)
\ee
nondegenerate lensed images of a single background source.
\end{thm}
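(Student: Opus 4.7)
The plan is to first construct, via a parameter rescaling, a nonphysical preliminary system in which the $K$ planes decouple into a composition of $K$ independent Rhie single-plane lensing problems, and then to deform the preliminary system into a physically admissible one using stability of nondegenerate zeros. Formally setting $\e_i = 0$ in \eqref{eq:systemgen} collapses the recursion to
\[
x_{i+1} = \phi_i(x_i), \qquad \phi_i(x) := x - \beta_i \alpha_i(x),
\]
so the full lensing equation becomes $\Phi(x_1) = y$ with $\Phi = \phi_K \circ \cdots \circ \phi_1$. This decoupled system is nonphysical, since physically each $\e_i>0$, but its composed structure is exactly what produces a multiplicative preimage count.

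For each $i$ fix a Rhie arrangement of $g_i$ point masses centered at the origin; by Theorem \ref{thm:single} and the explicit form of Rhie's examples there exists an open neighborhood $U_i \subset \R^2$ of the origin such that the associated $\phi_i$ has exactly $5g_i - 5$ nondegenerate preimages of every point of $U_i$, with all preimages contained in a bounded region. Rescaling positions and Einstein radii by a common factor $\lambda_i > 0$ dilates both the productive region and each preimage cluster by $\lambda_i$. I would then choose scales $\lambda_1 \gg \lambda_2 \gg \cdots \gg \lambda_K > 0$ and a source $y$ in $\lambda_K U_K$, and prove by descending induction on $i$ that
\[
\phi_i^{-1}\bigl(\phi_{i+1}^{-1}(\cdots \phi_K^{-1}(y)\cdots)\bigr)
\]
consists of $\prod_{j=i}^K (5g_j - 5)$ distinct nondegenerate points, all lying inside $\lambda_{i-1} U_{i-1}$. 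Disjointness across distinct intermediate targets is automatic because $\phi_i$ is a function; containment holds since each preimage cluster has diameter $O(\lambda_i)$ while $\lambda_{i-1} U_{i-1}$ contains a ball of radius of order $\lambda_{i-1} \gg \lambda_i$ about the origin. Taking $i=1$ gives the desired $\prod_{i=1}^K (5g_i-5)$ solutions $x_1$ of $\Phi(x_1)=y$, and the full list $(x_1,\ldots,x_K)$ is recovered from the recursion. Nondegeneracy of each full solution in $\R^{2K}$ follows because the linearization of the $\e_i=0$ system is block lower-triangular with diagonal blocks equal to the single-plane Rhie Jacobians, each of which is invertible.

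Since the system \eqref{eq:systemgen} depends smoothly on the parameters $(\e_i,\beta_i)$, the implicit function theorem applied at each of the finitely many transversal zeros of the $\e_i=0$ system produces a smooth branch of nondegenerate zeros for all sufficiently small $\e_i > 0$, and for small enough perturbation these branches stay disjoint and no new zeros intrude. Appealing to the parameter derivation in Section \ref{sec:param}, arbitrarily small positive $\e_i$ are realized by genuine plane-separation geometries, so the perturbed system is a physically admissible multiplane lens attaining $\prod_{i=1}^K (5g_i - 5)$ nondegenerate images. The principal difficulty I anticipate is the quantitative coordination of the cascade: one needs a uniform version of Rhie's theorem providing lower bounds on the inner radii of the $U_i$ and upper bounds on the diameters of the preimage clusters, so that the scale ratios $\lambda_i/\lambda_{i-1}$ can be chosen uniformly over all $\prod_{j>i}(5g_j-5)$ intermediate targets while preserving nondegeneracy throughout.
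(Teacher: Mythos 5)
Your proposal is correct and follows essentially the same route as the paper: decouple the planes by setting $\e_i=0$, cascade Rhie ensembles with a descending hierarchy of rescalings (the paper's Lemmas \ref{lem:1scale} and \ref{lem:scaleGen}, organized as an iterative rather than a priori choice of scales) to obtain the multiplicative image count, read off nondegeneracy from the block-triangular Jacobian, and pass to small $\e_i>0$ by stability of transversal zeros. The one point needing more care is your assertion that ``no new zeros intrude'' under the perturbation: as in the paper's application of Thom's isotopy lemma, one must first restrict to a compact domain obtained by deleting small neighborhoods of the point-mass positions $\xi_{j,\ell}$ and the exterior of a large disk, since the maps are singular at the masses and $\R^{2K}$ is noncompact.
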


Comparing this with the result of Petters on the minimal number of images, we notice that the quantity \eqref{eq:abundance} has an additional factor of $5^K$.  We deem this an ``abundance'' of images, however, we should reiterate that the problem of determining the maximum number of images for multiplane lensing remains open.

\subsection*{Related work}

In addition to the above-mentioned studies on the image counting problem for point-mass lenses, 
let us briefly mention some results in relevant directions (we also point the reader to surveys of results up to 2010 that can be found in \cite{Pet10}, \cite{PettersWerner}).
Following the breakthrough of Khavinson and Neumann that was mentioned above, single-plane lensing by point masses has been investigated further in
\cite{Bleher}, \cite{SeteCMFT}, \cite{SeteGrav}, \cite{SetePert}, \cite{Zur2018a}, \cite{Zur2018b},
\cite{Kuznia}.

The number of lensed images is related to the study of caustic structure, a topic that has been investigated in some of the previously mentioned papers as well as in \cite{PettersWittCaustics}, \cite{AazamiPettersRabin}, \cite{BozzaCusps}, and  \cite{PettersCusps}.  We note that in \cite{PettersCusps}, an upper bound on the number of cusps on the lensing caustic is established for single-plane lensing by $g$ point masses.  The upper bound grows quadratically in $g$, and it is asked in \cite{Pet10} whether this can be improved to a linear bound.  So far, there has apparently been no progress on this problem.

Gravitational lensing by a single elliptical galaxy has been investigated in
\cite{Keeton_2000}, \cite{FassKeetonKhav}, \cite{KhavLund}, \cite{BergErem}.
As pointed out in \cite{FassKeetonKhav} and \cite{Pet10}, the image-counting problem for lensing by multiple elliptical galaxies represents an interesting uncharted territory.

The image-counting problem in gravitational lensing shares some mathematical similarities with other problems of mathematical physics, such as Maxwell's problem on the number of equilibria of electrostatic point charge systems, or the problem of determining relative equilibria in the circular-restricted $n$-body problem, see the introduction of \cite{ACLPZ} which elaborates on these similarities and provides some references to relevant work.

Additionally, the image-counting problem is related to a complex analytic problem posed by Sheil-Small \cite{SS} and refined by Wilmshurst \cite{W2} asking to determine the maximum number of zeros of complex harmonic polynomials, i.e., polynomials of the form $p(z) + \overline{q(z)}$ where $z$ is a complex variable, $p$ and $q$ are univariate polynomials of different degrees, and $\overline{q(z)}$ denotes the complex conjugate of $q(z)$, see \cite{KN2} for an expository paper spanning both topics.
This topic has been studied in several papers including \cite{W2},  \cite{BHS1995},  \cite{KhSw}, \cite{BL2004},  \cite{G2008},
\cite{LLL},
\cite{KhavLeeSaez},
\cite{HLLM},
\cite{SeteZur},
\cite{LundRand}, but the problem of determining the maximum number of zeros (in terms of the degrees of $p$ and $q$) is yet another challenging problem that remains open.

\subsection*{Outline of the paper} We review some preliminary results in Section \ref{sec:prelim} (Rhie's single-plane examples and a stability result from differential topology) that serve as important tools in the proof of Theorem \ref{thm:examples}. We prove Theorem \ref{thm:examples} in Section \ref{sec:main} where we first present the construction in the two-plane case for clarity before proving the general case.
We present the results of some relevant numerical simulations in Section \ref{sec:numerics}, where we also discuss the ``caustic of multiplicity'' phenomenon mentioned in the abstract.  The paper also contains an appendix that includes a discussion of the parameters $\e_i$ and the physical meaning of choosing each of them to be small (as will be done in our construction).  The appendix also includes an elementary result from matrix theory needed in the proof of Theorem \ref{thm:examples}.

\section{Preliminaries}\label{sec:prelim}

\subsection{Review of Rhie's single-plane extremal examples and their non-degeneracy} \label{sec:Rhie1}


S.\ H.\ Rhie constructed extremal single-plane examples with $g \geq 2$ point masses lensing $5g-5$ images of a single background source \cite{Rhie}.

We summarize those examples here while following the presentation in \cite{Bleher}.

Using complex variable notation $z=x+iy \in \C \cong \R^2$, for each 
$g\geq 4$, consider the lensing equation
\be\label{eq:Rhie}
z - \sum_{k=1}^{g-1} \frac{z-\zeta_{k,g}}{|z-\zeta_{k,g}|^2} - b^2\frac{z}{|z|^2} = 0,
\ee
which describes a gravitational lens with $g-1$ point-masses located at the vertices $\zeta_{k,g}=a e^{\frac{2\pi k}{g-1}i}$ of a regular polygon along with a point-mass with mass $b^2$ positioned at the origin.  For $g=2,3$ the construction is simpler; we omit the mass at the origin and the lens consists of equal point masses at $\zeta_{k,g+1} = e^{\frac{2\pi k}{g}i}$, $k=0,1$ for $g=2$ and $k=0,1,2$ for $g=3$.
We write the lens equation for these cases collectively as
\be\label{eq:Rhie2}
z - \alpha(z) = 0,
\ee
where $\alpha(z):= \sum_{k=0}^{g-1} \frac{z-\zeta_{k,g}}{|z-\zeta_{k,g}|^2} + b^2\frac{z}{|z|^2}$ when $g \geq 4$, and
$\alpha(z) := \sum_{k=0}^{g} \frac{z-\zeta_{k,g+1}}{|z-\zeta_{k,g+1}|^2}$ for $g=2,3$.

Choosing $a=(g-2)^{-1/(g-1)} \left(\frac{g-2}{g-1}\right)^{1/2}$, and choosing $b=b(g)>0$ sufficiently small, the system \eqref{eq:Rhie2} has $5g-5$ nondegenerate equilibria \cite[Proof of Prop. 5.2]{Bleher}.
It then follows from Lemma \ref{lemma:stable} that the same statement holds for a background source $w$ sufficiently close to $w=0$.  We state this as a remark.

\begin{rem}\label{rmk:delta}
For $\delta>0$ sufficiently small, the disk $D_\delta = \{w\in \R^2 : |w|< \delta\}$ is contained in the set
\be
\mathcal{U} = \{ w \in \R^2 : x - \alpha(x) = w \text{ has } 5(g - 1) \text{ nondegenerate solutions} \}.    
\ee
\end{rem}

We note that the statement in the remark is also a consequence of \cite[Prop. 5.2]{Bleher}.

\subsection{Nondegeneracy and stability}

We will need the following lemma which is an instance of the  transversality and stability principle from differential topology.
The lemma follows as a special case of Thom's isotopy Lemma \cite[Prop. 11.1]{Mather}.

\begin{lemma}\label{lemma:stable}
Let $\Omega \subset \R^d$ be a bounded domain, and let $F: \Omega \rightarrow \R^d$ be a smooth map that extends to be smooth in a neighborhood of the closure of $\Omega$.  Suppose the zero set $\{F=0\}$ is finite and nondegenerate, i.e., the Jacobian determinant of $F$ is nonvanishing at each point in the preimage $F^{-1}(0)$.
Then, there exists $\e>0$ such that for all smooth functions $\hat{F}: \Omega \rightarrow \R^d$ satisfying
$\left\| F - \hat{F} \right\|_{C^1(\Omega)} < \e$ the zero set $\{ \hat{F} = 0 \}$ is nondegenerate and has the same number of points as the the zero set $\{ F=0 \}$.
\end{lemma}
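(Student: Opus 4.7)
The plan is to localize each zero of $F$ via the inverse function theorem and to argue separately on the compact complement of these localizations. Write $F^{-1}(0) = \{p_1, \ldots, p_n\}$, which is finite by hypothesis. Since $\det DF(p_j) \neq 0$, the inverse function theorem provides pairwise disjoint closed balls $\overline{B_j} \subset \Omega$ centered at $p_j$ such that $F\vert_{\overline{B_j}}$ is a $C^1$-diffeomorphism onto its image. By shrinking the $B_j$ further if necessary, I can arrange that $DF$ is so close to the invertible matrix $DF(p_j)$ throughout $\overline{B_j}$ that a quantitative lower bound $|F(x) - F(y)| \geq c_j |x-y|$ holds for all $x, y \in \overline{B_j}$, for some constant $c_j > 0$ (this is the standard quantitative output of the contraction-mapping proof of the inverse function theorem).

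On the compact complement $K = \overline{\Omega} \setminus \bigcup_j \mathrm{int}(B_j)$ the continuous function $|F|$ attains a positive minimum $m > 0$, and on each sphere $\partial B_j$ we have $|F| \geq m_j > 0$. Choose $\e_0$ strictly smaller than $m$ and all the $m_j$. Then any $\hat F$ with $\|F - \hat F\|_{C^0(\Omega)} < \e_0$ is nonvanishing on $K$ and on each $\partial B_j$, so every zero of $\hat F$ lies in the open balls $B_j$. Moreover, the straight-line homotopy $t F + (1-t) \hat F$ remains nonvanishing on $\partial B_j$ throughout $t \in [0,1]$, so by homotopy invariance of the Brouwer degree, the local degree of $\hat F$ on $B_j$ equals that of $F$, which is $\pm 1$ since $F\vert_{B_j}$ is a diffeomorphism. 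Consequently $\hat F$ has at least one zero in each $B_j$.

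For the matching upper bound (uniqueness in each ball) and for nondegeneracy, I invoke the $C^1$ smallness. Take $\e < \min(\e_0, c_1, \ldots, c_n)$ and suppose $\|F - \hat F\|_{C^1(\Omega)} < \e$. For $x, y \in \overline{B_j}$ the mean value inequality gives $|(\hat F - F)(x) - (\hat F - F)(y)| \leq \e |x-y|$, so combining with the bound above yields $|\hat F(x) - \hat F(y)| \geq (c_j - \e)|x-y|$. This shows $\hat F\vert_{\overline{B_j}}$ is injective, hence has at most one zero in each ball. Further, $\det D\hat F$ stays uniformly close to $\det DF$ and therefore nonvanishing on $\overline{B_j}$, so this unique zero is nondegenerate. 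The main subtlety is this uniqueness step: a signed count via degree alone permits cancellation between zeros of opposite sign, so purely $C^0$ smallness is not enough to rule out extra zero pairs inside $B_j$. The essential role of the $C^1$ hypothesis is precisely to promote the existence-via-degree argument into an existence-and-uniqueness argument by forcing local injectivity of $\hat F$.
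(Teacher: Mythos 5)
Your proof is correct, but it takes a genuinely different route from the paper's: the paper gives no direct argument at all, obtaining the lemma as a special case of Thom's isotopy lemma (citing Mather, Prop.\ 11.1), i.e.\ the general stability principle for transversality from differential topology. Your argument is self-contained and elementary: the inverse function theorem localizes the zeros of $F$ into disjoint balls carrying a quantitative lower Lipschitz bound, homotopy invariance of the Brouwer degree forces at least one zero of $\hat{F}$ in each ball, the $C^1$ bound upgrades this to exactly one via local injectivity, and continuity of the determinant gives nondegeneracy. What the citation buys is brevity and placement of the result in a standard framework; what your proof buys is verifiability without outside machinery, and in particular it isolates exactly where the $C^1$ (as opposed to $C^0$) hypothesis enters --- your observation that degree alone controls only a signed count and cannot exclude cancelling pairs of extra zeros is the right diagnosis. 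Two small points to tidy. First, your positive minimum $m>0$ of $|F|$ on $K=\overline{\Omega}\setminus\bigcup_j \mathrm{int}(B_j)$ presupposes that the smooth extension of $F$ does not vanish on $\partial\Omega$; this is not literally implied by the hypothesis that $\{F=0\}$ (taken in $\Omega$) is finite and nondegenerate, but it is needed for the lemma to hold at all, since a boundary zero can be pushed into $\Omega$ by an arbitrarily small perturbation, and it is arranged in the paper's application by the choice of the truncated domain $\Omega$. Second, since the paper's $C^1$ norm bounds the Jacobian entrywise, your mean value inequality picks up a harmless dimensional constant when passing to the operator norm, so the threshold should read $\e < c_j/d$ rather than $\e < c_j$.
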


Here, $\left\| F \right\|_{C^1(\Omega)} := \sup_{x \in \Omega} |F(x)| + \sup_{x\in \Omega} \max_{1 \leq i,j \leq d} | J[F]_{i,j}(x) |$ denotes the $C^1$-norm.

\begin{rem}\label{rem:finiteness}
We note that the finiteness condition on the zero set $\{F=0\}$ actually follows as a consequence of nondegeneracy (along with the other assumptions in the lemma---that $\Omega$ is bounded and $F$ extends to be smooth in a neighborhood of the closure of $\Omega$).  Indeed, it follows from compactness of the closure of $\Omega$ that if $\{F=0\}$ were infinite then it would have an accumulation point where $F$ vanishes (by continuity), but nondegenerate zeros are isolated (which follows from the inverse function theorem).
\end{rem}

\section{Construction of multiplane ensembles}\label{sec:main}

In this section we present the construction of examples verifying the statement in Theorem \ref{thm:examples}.

\subsection{Two-plane examples}\label{sec:K=2}

To make the main ideas clear, let us first present the proof of Theorem \ref{thm:examples} in the two-plane case with $g_1$ masses in the first plane and $g_2$ masses in the second plane.

The construction will require scaling the parameters related to the second plane in order to ensure its solutions all lie within a disk of a certain radius. For this, we will use the following lemma.
\begin{lemma} \label{lem:1scale}
Fix $\lambda >0$,
and consider a single-plane lens equation 
$y=x - \alpha(x)$ with parameters $y,b_i,\xi_i$.
Then
$x$ is a solution to $y=x - \alpha(x)$ if and only if $\lambda x$ is a solution to the single-plane lens equation with scaled parameters $\lambda y, \lambda b_i, \lambda \xi_i$, i.e., scaling the parameters leads to scaling the solution set by the same factor.
\end{lemma}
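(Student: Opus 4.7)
The plan is a direct verification based on a simple homogeneity property of the bending angle. Recalling from \eqref{eq:alpha} that
\ben
\alpha(x) = \sum_{\ell=1}^g b_\ell^2 \, \frac{x-\xi_\ell}{|x-\xi_\ell|^2},
\een
I would first denote by $\tilde\alpha$ the bending angle obtained from the scaled parameters $\lambda b_\ell$, $\lambda\xi_\ell$, and then compute $\tilde\alpha(\lambda x)$. Each term contributes a factor $\lambda^2$ from $(\lambda b_\ell)^2$, a factor $\lambda$ from the displacement $\lambda x-\lambda\xi_\ell$ in the numerator, and a factor $\lambda^2$ from $|\lambda x-\lambda\xi_\ell|^2$ in the denominator. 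Combining, one obtains the single identity
\ben
\tilde\alpha(\lambda x) = \lambda\,\alpha(x),
\een
valid for every $x$ in the common domain.

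With this homogeneity in hand, the second step is a one-line algebraic rearrangement. The scaled lens equation applied to $\lambda x$ reads
\ben
\lambda y = \lambda x - \tilde\alpha(\lambda x) = \lambda\bigl(x - \alpha(x)\bigr),
\een
and since $\lambda>0$, dividing by $\lambda$ yields the original equation $y = x - \alpha(x)$. Both implications of the ``if and only if'' follow at once from this equivalence, so the proof is complete.

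There is essentially no obstacle here; the content of the lemma is simply that the Newtonian-type denominator $|x-\xi|^2$ precisely absorbs the scaling in the numerator $x-\xi$ so that each term of $\alpha$ is positively homogeneous of degree $1$ under the joint rescaling of $x$, $\xi_\ell$, and $b_\ell$. The only care required is to keep track of where the factors of $\lambda$ come from—in particular, that the mass parameters enter quadratically while the positional parameters enter with a net degree zero—before concluding.
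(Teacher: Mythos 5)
Your proof is correct and follows essentially the same route as the paper's: both are direct algebraic verifications that each term of $\alpha$ is homogeneous of degree $1$ under the joint rescaling of $x$, $\xi_\ell$, and $b_\ell$. Isolating the identity $\tilde\alpha(\lambda x)=\lambda\,\alpha(x)$ first is a slightly cleaner organization of the same computation the paper performs by multiplying the lens equation through by $\lambda$ and regrouping.
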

\begin{proof}[Proof of Lemma]
Multiplying the lens equation
$ \displaystyle y = x -  \sum b_i^2 \frac{ x-\xi_i} {|x - \xi_i|^2}$
by $\lambda$ produces an equivalent equation $ \displaystyle \lambda y = \lambda x - \lambda \sum b_i^2 \frac{ x-\xi_i} {|x - \xi_i|^2}$ which can be manipulated as follows so that it is of the form of a lens equations with scaled parameters and scaled input variable.
\begin{align}
\lambda y &=  \lambda \left( x - \sum b_i^2 \frac {x - \xi_i}{|x - \xi_i|^2} \right)\\
  &=  \lambda x - \sum \lambda b_i^2 \frac{x - \xi_i} {|x - \xi_i |^2} \\
  &=  \lambda x - \sum (\lambda b_i)^2 \frac{\lambda x - \lambda \xi_i} {|\lambda x - \lambda \xi_i |^2}.
\end{align}
We observe that the final equation
$ \displaystyle \lambda y = \lambda x - \sum (\lambda b_i)^2 \frac{\lambda x - \lambda \xi_i} {|\lambda x - \lambda \xi_i |^2}$
is indeed a single-plane lens equation where both the parameters and input variable have been scaled by $\lambda$, and from this the conclusion of the lemma follows.
\end{proof}

\begin{proof}[Proof of Theorem \ref{thm:examples} in the case $K=2$]
In the case $K=2$, the system \eqref{eq:systemgen} consists of two vector equations, which we write as
\be\label{eq:system}
\begin{cases}
x_2   &= x_1 - \alpha_1 (x_1) \\
y  &= x_2 + \e (x_2 - x_1) - \alpha_2(x_2)
\end{cases},
\ee
where $\e=\e_2>0$ will be chosen to be small as we explain below, and we collect the $\beta_i$ constants into the bending angle vectors $\alpha_i$ by absorbing them into the Einstein radius parameters $b_{j,\ell}$. See the Appendix (Section \ref{sec:param}) for a discussion on the parameters $\e_i$ and $\beta_i$ in terms of the choice of the position of each lensing plane.

We will use a stability argument to show that, for particular $\alpha_1$ and $\alpha_2$ and with $\e>0$ sufficiently small, the system \eqref{eq:system} has $5(g_1-1)5(g_2-1)$ nondegenerate solutions.

First, consider a system with $\e=0$ which we write as
\be\label{eq:system2}
\begin{cases}
f_1(x_1,x_2) &:= x_1 - x_2  -\alpha_1 (x_1) = 0\\ \quad f_2(x_2) &:= x_2  - y - \alpha_2(x_2) = 0
\end{cases}.
\ee
For the source position we take $y=0$.
We choose the position and mass parameters appearing in the deflection term $\alpha_1$ using a single-plane Rhie ensemble with $g_1 \geq 2$ point-masses as described in Section \ref{sec:Rhie1}.
Let $\alpha_2^*$ denote the deflection term corresponding to a Rhie ensemble with $g_2$ point masses.  The asterisk in $\alpha_2^*$ indicates that this is a preliminary choice; we will arrive at a choice for $\alpha_2$ once we appropriately scale the parameters in $\alpha_2^*$.
As stated in Remark \ref{rmk:delta}, for $\delta>0$ sufficiently small we have that the disk $D_\delta = \{x_2\in \R^2 : |x_2|< \delta\}$ is contained in the set
\be
\mathcal{U} = \{ x_2 \in \R^2 : x_1-\alpha_1(x_1)=x_2 \text{ has } 5(g_1 - 1) \text{ nondegenerate solutions} \}.     
\ee
Fix such a $\delta>0$. Let $R$ be the radius of a disk containing all of the $5(g_2 -1)$ solutions of the equation $x_2-\alpha_2^*(x_2)=0$. Let $\lambda = \delta/R$. With $\alpha_2^*$ written as
$$
\alpha_2^*(x) = \sum_{i=0}^{g_2}b_{2,i}^2 \frac{x - \xi_{2,i}}{|x - \xi_{2,i}|^2}
$$
we define $\alpha_2$ by 
$$
\alpha_2(x) = \sum_{i=0}^{g_2} (\lambda  b_{2,i})^2 \frac{x - \lambda\xi_{2,i}}{|x - \lambda\xi_{2,i}|^2}.
$$
Note that, by Lemma \ref{lem:1scale} and the choice of the scaling parameter $\lambda=\delta/R$, the second equation $f_2(x_2) = 0$ in \eqref{eq:system2} has $5(g_2-1)$ solutions that are all contained within the disk $D_\delta$ of radius $\delta$. Hence, by our choice of $\delta$, substituting each of these $5(g_2-1)$ values of $x_2$ into the first equation $f_1(x_1,x_2) = 0$, gives rise to $5(g_1 - 1)$ solutions. Thus, the system \eqref{eq:system2} has $5(g_1-1)5(g_2-1)$ solutions.

Next we check that the solution set of the system \eqref{eq:system2} is nondegenerate, i.e., the Jacobian determinant of the map $F : \R^4 \to \R^4$ defined by $F(x_1,x_2) = (f_1(x_1,x_2), f_2(x_2))$ does not vanish at any of these $5(g_1-1)5(g_2-1)$ solutions. This can be seen by observing that the Jacobian of this map is 
\[
 J[F] = \begin{pmatrix}
J_1 & -I \\\
0 & J_2
 \end{pmatrix},
\]
where $J_i$ is the $2 \times 2$ Jacobian matrix of the mapping $x_i \to x_i - \alpha_i(x_i)$, $I$ denotes the $2 \times 2$ identity matrix, and $0$ denotes the $2 \times 2$ zero matrix. From the block upper triangular structure of this matrix, it follows that 
$\det(J[F]) = \det(J_1)\det(J_2)$ 
(see Lemma \ref{lem:matrix} in the Appendix).
By the known nondegeneracy of Rhie's single-plane examples (see again Remark \ref{rmk:delta}), we have $\det (J_2)$ does not vanish at any of the $5(g_2-1)$ solutions of $f_2(x_2)=0$.  Moreover, each of these $5(g_2-1)$ solutions of $f_2(x_2)=0$ satisfies $x_2 \in D_\delta$.  By choice of $\delta>0$, for each $x_2 \in D_\delta$ we have that $\det(J_1)$ does not vanish at any of the $5(g_1-1)$ solutions of the equation $x_1- \alpha_1(x_1) = x_2$.
Hence, $\det(J[F]) = \det(J_1)\det(J_2)$ does not vanish at any of the $5(g_2-1)5(g_1-1)$ solutions of the system \eqref{eq:system2}. As desired, this shows that the solution set of the system \eqref{eq:system2} is nondegenerate.

Finally, we obtain the desired lensing system as a perturbation of the above nondegenerate system.  Lemma \ref{lemma:stable} states that 
nondegeneracy is a stable condition with respect to $C^1$-small perturbations. 
Before applying Lemma \ref{lemma:stable}, we first need to restrict the domain of the map $F$ which is smooth except when the coordinate $x_\ell$ coincides with a point mass position $\xi_{j,\ell}$, so we remove from the $\ell \thh$ lens plane a small closed disk centered at each point mass position $\xi_{j,\ell}$ (while choosing the radii sufficiently small to avoid removing any solutions of \eqref{eq:nonphysical}).  In order to ensure the compactness condition, we also remove the complement of a large disk from each coordinate plane (choosing the radius large enough again to avoid removing any solutions).  The resulting domain $\Omega$ has compact closure, and $F$ is smooth on $\Omega$ and extends to be smooth in a neighborhood of the closure of $\Omega$.
As we have shown above, the system \eqref{eq:nonphysical} is nondegenerate, and choosing $\e>0$ small, the perturbation $(0, \e(x_2-x_1))$ appearing in \eqref{eq:system} can be made sufficiently $C^1$-small (over the closure of $\Omega$), so that we may apply Lemma \ref{lemma:stable}.
Choosing such an $\e>0$ thus furnishes a point-mass gravitational lens with $5(g_1-1)5(g_2-1)$ images of a single background source, and this proves the theorem in the case of $K=2$ lens planes.
\end{proof}

\begin{rem}
We note that the preliminary system \eqref{eq:system2} with $\e=0$ does not in general correspond to a physical gravitational lens, (see \cite[pg 288]{PetBook}).  The point of the above argument (which is applied again in the general $K$-plane case below) is that the preliminary nonphysical system can be perturbed to a physically meaningful system while preserving nondegeneracy and the number of solutions.
\end{rem}

\subsection{Proof of Theorem \ref{thm:examples} in the general setting}

We now proceed to the proof by induction of Theorem \ref{thm:examples} in the general setting of $K$ planes. The strategy is broadly the same as in the case of $k=2$ planes and uses a stable perturbation argument that begins by constructing a nonphysical example where $\e=0$, i.e., a system of equations of the form
\be\label{eq:nonphysical}
\begin{cases}
x_2   &= x_1 - \alpha_1 (x_1)  \\
x_3   &= x_2 - \alpha_2 (x_2) \\ 
&\vdots \\
x_{K}  &= x_{K-1}  -  \alpha_{K-1}(x_{K-1}) \\
y  &= x_{K}  -  \alpha_{K}(x_{K})
\end{cases}.
\ee
The only complication in adapting the proof to the general case is in how we rescale parameters. This will need to be done repeatedly, starting with the parameters related to the $K\thh$ plane and working backward.  To verify at the inductive step that we can successfully rescale parameters, we will need the following more general version of Lemma \ref{lem:1scale}.

\begin{lemma}\label{lem:scaleGen}
Let $\lambda >0$.  Consider the system
\be\label{eq:jtoK}
\begin{cases}
x_{j+1}  &= x_{j} - \alpha_j (x_j)  \\
x_{j+2}  &= x_{j+1} - \alpha_{j+1} (x_{j+1}) \\ 
&\vdots \\
x_{K}  &= x_{K-1}  -  \alpha_{K-1}(x_{K-1}) \\
y  &= x_{K}  -  \alpha_{K}(x_{K})
\end{cases}.
\ee
A vector $(x_j,x_{j+1},...,x_{K})$ solves the system of equations \eqref{eq:jtoK} with parameters $y,b_i,\xi_i$ 
if and only if the scaled vector $(\lambda x_j, \lambda x_{j+1},...,\lambda x_{K})$
solves the system \eqref{eq:jtoK} with scaled parameters  $\lambda y,\lambda b_i, \lambda \xi_i$,
i.e., scaling the parameters leads to scaling the solution set by the same factor.
\end{lemma}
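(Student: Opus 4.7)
The plan is to reduce this to a row-by-row application of the single-plane scaling identity established in Lemma \ref{lem:1scale}. The underlying observation is that when one scales the position variable and all parameters ($y$, $b_{i,\ell}$, $\xi_{i,\ell}$) uniformly by $\lambda$, the bending angle vector transforms homogeneously of degree one. Concretely, I would first verify the identity
$$
\hat{\alpha}_i(\lambda x_i) \;=\; \sum_\ell (\lambda b_{i,\ell})^2\, \frac{\lambda x_i-\lambda\xi_{i,\ell}}{|\lambda x_i-\lambda\xi_{i,\ell}|^2}\;=\;\lambda\,\alpha_i(x_i),
$$
where $\hat{\alpha}_i$ denotes the bending angle vector with all of its position and Einstein-radius parameters scaled by $\lambda$. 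This is a one-line computation: the factor $\lambda^2$ from the numerator and $1/\lambda^2$ from the $|\cdot|^2$ in the denominator cancel, leaving an overall $\lambda$ from the numerator shift $\lambda(x_i-\xi_{i,\ell})$.

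Next, I would multiply each of the $K-j+1$ equations in \eqref{eq:jtoK} by $\lambda$. A generic coupling equation $x_{i+1} = x_i - \alpha_i(x_i)$ becomes
$$
\lambda x_{i+1} \;=\; \lambda x_i - \lambda\,\alpha_i(x_i) \;=\; \lambda x_i - \hat{\alpha}_i(\lambda x_i),
$$
using the homogeneity identity above. The final equation $y = x_K - \alpha_K(x_K)$ likewise transforms into $\lambda y = \lambda x_K - \hat{\alpha}_K(\lambda x_K)$. Since $\lambda>0$, multiplication by $\lambda$ is an invertible operation on each equation, so the new system (in the unknowns $\lambda x_j,\lambda x_{j+1},\ldots,\lambda x_K$, with parameters $\lambda y,\lambda b_{i,\ell},\lambda\xi_{i,\ell}$) is equivalent to the original one. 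This is precisely the conclusion of the lemma.

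There is no real obstacle here; the argument is a direct generalization of the single-plane case treated in Lemma \ref{lem:1scale}. The only point worth noting explicitly is that the uniform scaling of the unknown vector is consistent across all equations, so that the left-hand side $\lambda x_{i+1}$ of one equation matches the variable $\lambda x_{i+1}$ appearing inside $\hat{\alpha}_{i+1}$ on the right-hand side of the next equation; this compatibility is what permits the row-by-row rescaling to yield a coherent rescaling of the entire coupled system.
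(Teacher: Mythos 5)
Your proof is correct and follows essentially the same route as the paper's: multiply each equation by $\lambda$ and absorb the factor into the bending term exactly as in Lemma \ref{lem:1scale}, observing that the uniform rescaling is consistent across the coupled equations. Isolating the degree-one homogeneity of $\alpha_i$ as an explicit identity is a slightly cleaner packaging of the same computation.
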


\begin{proof}[Proof of Lemma]
Multiplying each equation in the system \eqref{eq:jtoK} by $\lambda>0$ we obtain an equivalent system of equations $\lambda x_{\ell+1} = \lambda x_\ell - \lambda \sum  b_i^2 \frac{ x_\ell-\xi_i} {|x_\ell - \xi_i|^2}$ for $ \ell=j,j+1,...,K$, where $x_{K+1}=y$.
Each of these equations can be manipulated as in the proof of Lemma \ref{lem:1scale} so that they each take the form 
$$\lambda x_{\ell+1} = \lambda x_\ell - \sum  (\lambda b_i)^2 \frac{ \lambda x_\ell - \lambda \xi_i} {|\lambda x_\ell - \lambda \xi_i|^2}.$$
We observe that this results in a system of the form \eqref{eq:jtoK} where all parameters and input variables have been scaled by $\lambda$, and this verifies the lemma.
\end{proof}


Before presenting the detailed proof, let us sketch the idea of the scaling algorithm, which may be informally described as follows: Taken as a single plane, the $K\thh$ plane lenses $5(g_K-1)$ images, and we imagine these as playing the role of sources for the $(K-1)\thh$ plane. Rescaling the parameters in the $K\thh$ plane (exactly as we did in the proof of the case $K=2$), each of these $5(g_K-1)$ sources gives rise to $5(g_{K-1}-1)$ images lensed by the $(K-1)\thh$ plane (treating it as a single plane at this step).  Using Lemma \ref{lem:scaleGen}, we then scale the parameters in both the $(K-1)\thh$ plane and the $K\thh$ plane, so that the $5(g_{K-1}-1)5(g_K-1)$ solutions of this restricted system each give rise, treating them as sources for the $(K-2)\thh$ plane, to $5(g_{K-2}-1)$ images, and so on. Repeatedly rescaling in this way until reaching the first plane eventually produces the desired (preliminary) system of the form \eqref{eq:nonphysical} with $\prod_{i=1}^K 5(g_i -1)$ solutions.

\begin{proof}[Proof of Theorem \ref{thm:examples}]
We now extend the argument from Section \ref{sec:K=2} to construct $K$-plane examples that produce $\prod_{i=1}^K (5g_i-5)$ images of a single background source. We write the system \eqref{eq:systemgen} as
\be\label{eq:systemK}
\begin{cases}
x_2   &= x_1 -\alpha_1 (x_1)  \\
x_{i+1}  &= x_i + \e_i (x_i - x_{i-1}) - \alpha_i(x_i), \quad i=2,3,...,K
\end{cases},
\ee
where $x_{K+1} = y$, the fixed source location, and as before we have collected the $\beta_i$ constants into the bending angle vectors $\alpha_i$ by absorbing them into the Einstein radius parameters $b_{j,\ell}$. 

Each of the parameters $\e_i>0$ will be chosen sufficiently small (the physical meaning of choosing $\e_i$ small is explained in the Appendix), hence viewing \eqref{eq:systemK} as a perturbation of the following system.
\be\label{eq:system2K0}
\begin{cases}
\quad f_1(x_1,x_2) &:= x_1 - x_2  -\alpha_1 (x_1) = 0 \\
\quad f_2(x_2,x_3) &:= x_2 - x_3  -\alpha_2 (x_2) = 0 \\
\quad\quad\quad \dots \\
f_{K-1}(x_{K-1},x_K) &:= x_{K-1} - x_{K}  -\alpha_{K-1} (x_{K-1}) = 0 \\
\quad \quad f_K(x_K) &:= x_K  - y - \alpha_K(x_K) = 0
\end{cases}.
\ee
The key advantage of working with this system is its triangular structure; the variables $x_j$ with $j<i$ are absent from the $i^\text{th}$ equation in the system. Consequently, the solution set of \eqref{eq:system2K0} can be described by back-substitution.

We must prove that, for an appropriate scaling of the parameters in the Rhie-type ensembles appearing within each plane, the system \eqref{eq:system2K0} has $\prod_{i=1}^K (5g_i-5)$ nondegenerate solutions. As before, given masses $g_1$, $g_2$, ... , $g_K$ we start with a preliminary choice of parameters for each plane based on the single-plane construction in Section \ref{sec:Rhie1}. This produces a system of equations: 

\be\label{eq:sys2K1}
\begin{cases}
\quad f_1(x_1,x_2) &:= x_1 - x_2  -\alpha_1 (x_1) = 0 \\
\quad f_2^{(1)*}(x_2,x_3) &:= x_2 - x_3  -\alpha_2^{(1)*}(x_2) = 0 \\
\quad\quad\quad \dots \\
f_{K-1}^{(K-2)*}(x_{K-1},x_K) &:= x_{K-1} - x_{K}  -\alpha_{K-1}^{(K-2)*} (x_{K-1}) = 0 \\
\quad \quad f_K^{(K-1)*}(x_K) &:= x_K  - y- \alpha_K^{(K-1)*}(x_K) = 0
\end{cases},
\ee
where an asterisk indicates the need for scaling. The number in parenthesis preceding the asterisk indicates how many times in the course of our construction the associated parameters will need to be scaled before arriving at the final choice.  When these numbers are all $0$, the ``countdown'' is complete, and we will arrive at our choice of parameters for the $\e =0$ ensemble.

\textit{Step 1:} The last equation $f_{K}^{(K-1)*}(x_K)=0$ in system \eqref{eq:sys2K1} has $5(g_K-1)$ solutions that all lie within a disk of some radius $R_K$ centered at the origin. As in Remark \ref{rmk:delta} given $\delta_K>0$ sufficiently small the disk  $D_{\delta_{K}}:=\{x_K \in \R^2 : |x_K|<\delta_K \}$ is contained in $\mathcal{U}_{K-1}$, the set of values $x_K \in \R^2$ such that $f_{K-1}^{(K-2)*}(x_{K-1},x_K)=0$ has $5(g_{K-1}-1)$ nondegenerate solutions. Fix such a $\delta_K$ and let $\lambda_K = \delta_K / R_K$. 
With $\alpha_K^{(K-1)*}$ expressed as
$$
\alpha_K^{(K-1)*}(x) = \sum_{i=0}^{g_K}b_{K,i}^2 \frac{x - \xi_{K,i}}{|x - \xi_{K,i}|^2}
$$
we define $\alpha_K^{(K-2)*}$ by scaling parameters
$$
\alpha_K^{(K-2)*}(x) := \sum_{i=0}^{g_2}(\lambda_K b_{K,i})^2 \frac{x - \lambda_K\xi_{K,i}}{|x - \lambda_K\xi_{K,i}|^2}.
$$
Likewise we define $$f_K^{(K-2)*}(x_K) := x_K - \lambda_K y - \alpha_K^{(K-2)*}(x_K) = 0.$$
Note that by Lemma \ref{lem:1scale} along with the choice of the scaling factor $\lambda_K$,
the equation $f_{K}^{(K-2)*}(x_K)=0$ has $5(g_K-1)$ solutions that all lie within the disk $D_{\delta_K}$, and hence the pair of equations $\{ f_{K-1}^{(K-2)*}(x_{K-1},x_K)=0, f_{K}^{(K-2)*}(x_K)=0\}$ have $5(g_K-1)5(g_{K-1}-1)$ solutions.

\textit{Step 2:} The values of $x_{K-1}$ in these $5(g_K-1)5(g_{K-1}-1)$ solutions are all contained in a disk of some radius $R_{K-1}$. For $\delta_{K-1}>0$ sufficiently small the disk  $D_{\delta_{K-1}}$ is contained in $\mathcal{U}_{K-2}$, the set of values $x_{K-1} \in \R^2$ such that $f_{K-2}^{(K-3)*}(x_{K-2},x_{K-1})=0$ has $5(g_{K-2}-1)$ nondegenerate solutions. Fix such a $\delta_{K-1}$ and let $\lambda_{K-1} = \delta_{K-1} / R_{K-1}$. We then scale the parameters associated to \textit{both} the ultimate and penultimate lens planes:

With $\alpha_{K-1}^{(K-2)*}$ written
$$
\alpha_{K-1}^{(K-2)*}(x) := \sum_{i=0}^{g_{K-1}}(b_{K-1,i})^2 \frac{x - \xi_{K-1,i}}{|x - \xi_{K-1,i}|^2}
$$
we define $\alpha_{K-1}^{(K-3)*}$ by 
$$
\alpha_{K-1}^{(K-3)*}(x) := \sum_{i=0}^{g_{K-1}}\lambda_{K-1}^2 (b_{K-1,i})^2 \frac{x - \lambda_{K-1}\xi_{K-1,i}}{|x - \lambda_{K-1}\xi_{K-1,i}|^2}
$$
and
$\alpha_{K}^{(K-3)*}$ by 
$$
\alpha_K^{(K-3)*}(x) = \sum_{i=0}^{g_K}(\lambda_K\lambda_{K-1})^2 (b_{K,i})^2 \frac{x - \lambda_K\lambda_{K-1}\xi_{K,i}}{|x - \lambda_K\lambda_{K-1}\xi_{K,i}|^2}
$$
Likewise we define 
$$f_{K-1}^{(K-3)*}(x_{K-1},x_K) := x_{K-1} - x_K - \alpha_{K-1}^{(K-3)*}(x_{K-1}) = 0$$
and 
$$f_{K}^{(K-3)*}(x_{K}) := x_{K} - \alpha_{K}^{(K-3)*}(x_{K}) = 0.$$ 

By Lemma \ref{lem:scaleGen} and our choice of $\lambda_K$ and $\lambda_{K-1}$, all $5(g_K-1)5(g_{K-1}-1)$ solutions all within the disk $D_{\delta_{K-1}}$.  Hence, the final three (updated) equations
$\{f_{K-2}^{(K-3)*}(x_{K-1},x_K)=0, f_{K-1}^{(K-3)*}(x_{K-1},x_K)=0, f_{K}^{(K-3)*}(x_{K})=0 \} $ have $5(g_K-1)5(g_{K-1}-1)5(g_{K-2}-1)$ solutions.

\textit{Step j:} Here we give the general $j\thh$ step in the construction, where $1<j\leq K-1$. We presume that we have, at the $j-1\thh$ step, scaled each ensemble in planes $K$, $K-1$, ... , $K-j+2$ such that for the restricted system of equations
\be\label{eq:restrsys} 
\begin{cases}
\quad f_{K-j+1}^{(K-j)*}(x_{K-j+1}, x_{K-j+2}) &= x_{K-j+1} - x_{K-j+2} - \alpha_{K-j+1}^{(K-j+1)*} = 0 \\
\\
\quad f_{K-j+2}^{(K-j)*}(x_{K-j+2}, x_{K-j+3}) &= x_{K-j+2} - x_{K-j+3} - \alpha_{K-j+2}^{(K-j+1)*} = 0 \\
 &\vdots \\
\quad f_{K-1}^{(K-j)*}(x_{K-1},x_K) &= x_{K-1} - x_{K}  -\alpha_{K-1}^{(K-j+1)*} (x_{K-1}) = 0 \\
\\
\quad \quad f_K^{(K-j)*}(x_K) &= x_K  - y- \alpha_K^{(K-j+1)*}(x_K) = 0
\end{cases},
\ee
there are $\prod_{\ell=0}^{j-1} 5(g_{K-\ell}-1)$ values in the coordinate $x_{K-j+1}$ corresponding to nondegenerate solutions, all lying within some radius $R_{K-j+1}$. For $\delta_{K-j+1}>0$ sufficiently small the disk  $D_{\delta_{K-j+1}}$ is contained in $\mathcal{U}_{K-j}$, the set of values $x_{K-1+1} \in \R^2$ such that $f_{K-j}^{(K-j-1)*}(x_{K-j},x_{K-j+1})=0$ has $5(g_{K-j}-1)$ nondegenerate solutions. Fix such a $\delta_{K-j+1}$ and let $\lambda_{K-j+1} = \delta_{K-j+1} / R_{K-j+1}$. We then scale the parameters associated to \textit{all} the lens planes from $K-j+1$ to $K$:

With $\alpha_{K-j+1}^{(K-j)*}$ written
$$
\alpha_{K-j+1}^{(K-j)*}(x) = \sum_{i=0}^{g_{K-j+1}}b^2_{K-j+1,i} \frac{x - \xi_{K-j+1,i}}{|x - \xi_{K-j+1,i}|^2}
$$
we define $\alpha_{K-j+1}^{(K-j-1)*}$ by 
$$
\alpha_{K-j+1}^{(K-j-1)*}(x) := \sum_{i=0}^{g_{K-j+1}}(\lambda_{K-j+1} b_{K-j+1,i})^2 \frac{x - \lambda_{K-j+1}\xi_{K-j+1,i}}{|x - \lambda_{K-j+1}\xi_{K-j+1,i}|^2},
$$

$\alpha_{K-j+2}^{(K-j-1)*}$ by 
$$
\alpha_{K-j+2}^{(K-j-1)*}(x) := \sum_{i=0}^{g_{K-j+2}}((\lambda_{K-j+2}\lambda_{K-j+1}) b_{K-j+2,i})^2 \frac{x - (\lambda_{K-j+2}\lambda_{K-j+1})\xi_{K-j+2,i}}{|x - (\lambda_{K-j+2}\lambda_{K-j+1})\xi_{K-j+2,i}|^2},
$$
and so on up to $\alpha_{K}^{(K-j-1)*}$ defined by
$$
\alpha_{K}^{(K-j-1)*}(x) := \sum_{i=0}^{g_K}\left(\left( \prod_{\ell=0}^{j-1}\lambda_{K-\ell} \right) b_{K,i}\right)^2 \frac{x - (\prod_{\ell=0}^{j-1}\lambda_{K-\ell})\xi_{K,i}}{|x - (\prod_{\ell=0}^{j-1}\lambda_{K-\ell})\xi_{K,i}|^2},
$$

We correspondingly define a system of $j+1$ scaled equations
\be\label{eq:restrsys2}
\begin{cases}
f_{K-j}^{(K-j-1)*}(x_{K-j}, x_{K-j+1}) &= x_{K-j} - x_{K-j+1} - \alpha_{K-j}^{(K-j-1)*}(x_{K-j}) := 0 \\
\\
 f_{K-j+1}^{(K-j-1)*}(x_{K-j+1}, x_{K-j+2}) &:= x_{K-j+1} - x_{K-j+2} - \alpha_{K-j+1}^{(K-j-1)*}(x_{K-j+1}) = 0 \\
 &\vdots \\
\quad f_{K-1}^{(K-j-1)*}(x_{K-1},x_K) &:= x_{K-1} - x_{K}  -\alpha_{K-1}^{(K-j-1)*} (x_{K-1}) = 0 \\
\\
\quad \quad f_K^{(K-j-1)*}(x_K) &:= x_K  - y- \alpha_K^{(K-j-1)*}(x_K) = 0
\end{cases},
\ee

By Lemma \ref{lem:scaleGen} and our choice of scaling parameters, all $\prod_{\ell=0}^{j-1} 5(g_{K-\ell}-1)$ solutions of system \eqref{eq:restrsys} are, after scaling, all within the disk $D_{\delta_{K-j+1}}$.  Hence, the system \eqref{eq:restrsys2} has $\prod_{\ell=0}^{j} 5(g_{K-\ell}-1)$ solutions.

Continuing in this fashion, after $K-1$ steps, we will have scaled the Einstein radii and mass locations in the $j\thh$ plane ensemble by $\prod_{\ell=2}^j \lambda_\ell$, where $j=2,...,K$. We then take $\alpha_j(x_j,x_{j+1}) := \alpha_j^{0*}(x_j,x_{j+1})$  to be the scaled ensembles for $i=2,...,K$. This results in a system of the form \eqref{eq:system2K0} that has $\prod_{\ell=1}^K 5(g_\ell-1)$ solutions.

Next we verify nondegeneracy,
that is, we show that the Jacobian determinant of the map $F : \R^{2K} \to \R^{2K}$ defined by 
\be \label{eqn:KF}
F(x_1,...,x_K) = (f_1(x_1,x_2), f_2(x_2,x_3),... , f_{K-1}(x_{K-1},x_K), f_K(x_K))
\ee
is nonvanishing at each point in the solution set.
From a direct computation we find that the Jacobian $\mathcal{J}[F]$ of the mapping $F$ is
\be\label{eq:Jacobian}
\mathcal{J}[F]=
\begin{bmatrix}
J_1 & -I & 0 & \cdots & 0 \\
0 & J_2 & -I & \ddots & 0 \\
0 & 0 & J_3 & \ddots & \vdots \\
\vdots & \vdots & \ddots & \ddots & -I \\
0 & 0 & \cdots & 0 & J_k
\end{bmatrix}
\ee
where $J_i = J_i(x_i)$ is the $2 \times 2$ Jacobian matrix of the mapping $x_i \rightarrow x_i - \alpha_i(x_i)$,
and $I$ denotes the $2 \times 2$ identity matrix.
The determinant $\det (\mathcal{J}[F])$ is the product of the determinants $\displaystyle \prod_{i=1}^K \det(J_i)$ (see Lemma \ref{lem:matrix} in Section \ref{sec:JND} of the Appendix), hence nondegeneracy of the solution set $\{ x \in \R^{2K}: F(x) = 0 \}$ follows from the nondegeneracy of the individual Rhie ensembles chosen in each lensing plane.

That the original system (\ref{eq:systemgen}) with a sufficiently small choice of each $\e_i>0$ has the same number of nondegenerate solutions is then a consequence of Lemma \ref{lemma:stable} (as in the two-plane case, in applying Lemma \ref{lemma:stable} we first restrict the domain of $F$ by removing from the $\ell \thh$ lens plane a small closed centered at each point mass position $\xi_{j,\ell}$ as well as removing the exterior of a large disk), and this completes the proof of the theorem.
\end{proof}

\section{Numerical results}\label{sec:numerics}

\begin{figure}
\includegraphics[width=0.8\textwidth]{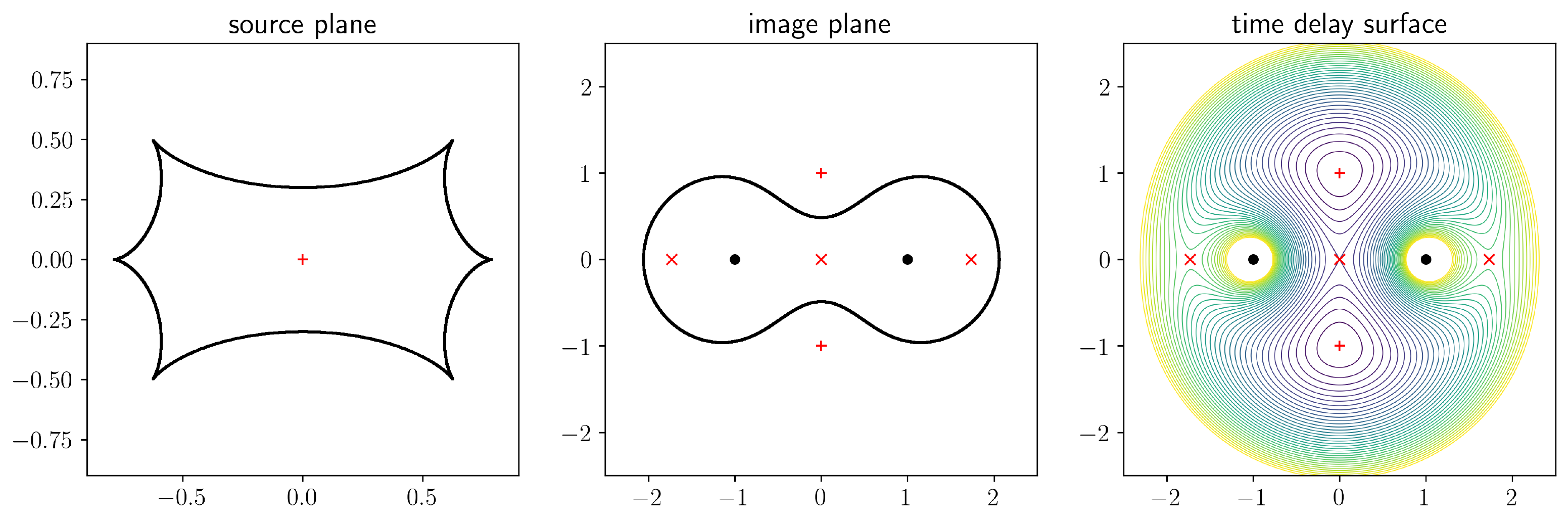}
\caption{
Example of a single plane with two masses with $b=1$.
The left panel shows the caustic in the source plane, and a source at the origin (red $+$).
The middle panel shows the critical curve in the image plane, along with the locations of the two masses (black dots), and the five lensed images: minima are indicated by $+$ and saddles by $\times$.
The right panel shows contours of the time delay function in the image plane, with the masses and images again indicated.
}\label{fig:single-g2}
\end{figure}

\begin{figure}
    \includegraphics[width=0.8\textwidth]{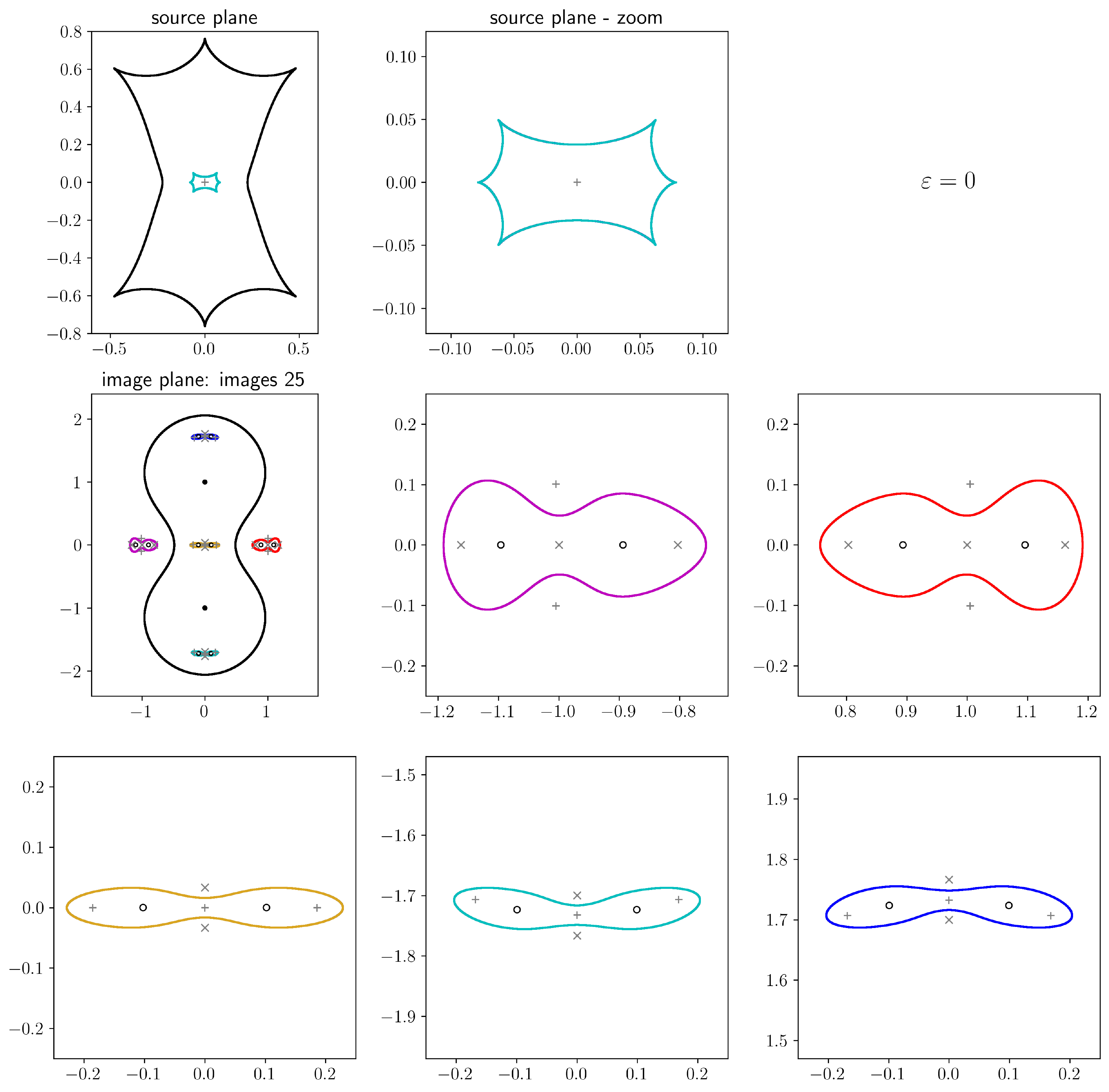}
\caption{
Example of a two-plane scenario with $g=2$ in each plane.
In the foreground plane, the two masses have $b=1$ and are placed on the $y$-axis.
In the background plane, the masses are placed on the $x$-axis, and the plane is scaled by $\lambda = 0.1$.
Here the two-plane scenario has $\e = 0$.
The top row shows the caustics in the source plane, along with the location of the source ($+$).
The middle left panel shows the full set of critical curves in the image plane, while the remaining panels show close ups of different areas.
Also shown are the 25 images, with minima indicated by $+$ and saddles by $\times$.
In the image planes, filled circles indicate the locations of the masses in the foreground plane, while open circles indicate projections of the masses in the background plane.
The small, central caustic (which appears cyan here) is actually an overlay of five copies of the caustic.
}\label{fig:double-g2-eps0}
\end{figure}

\begin{figure}
    \includegraphics[width=0.8\textwidth]{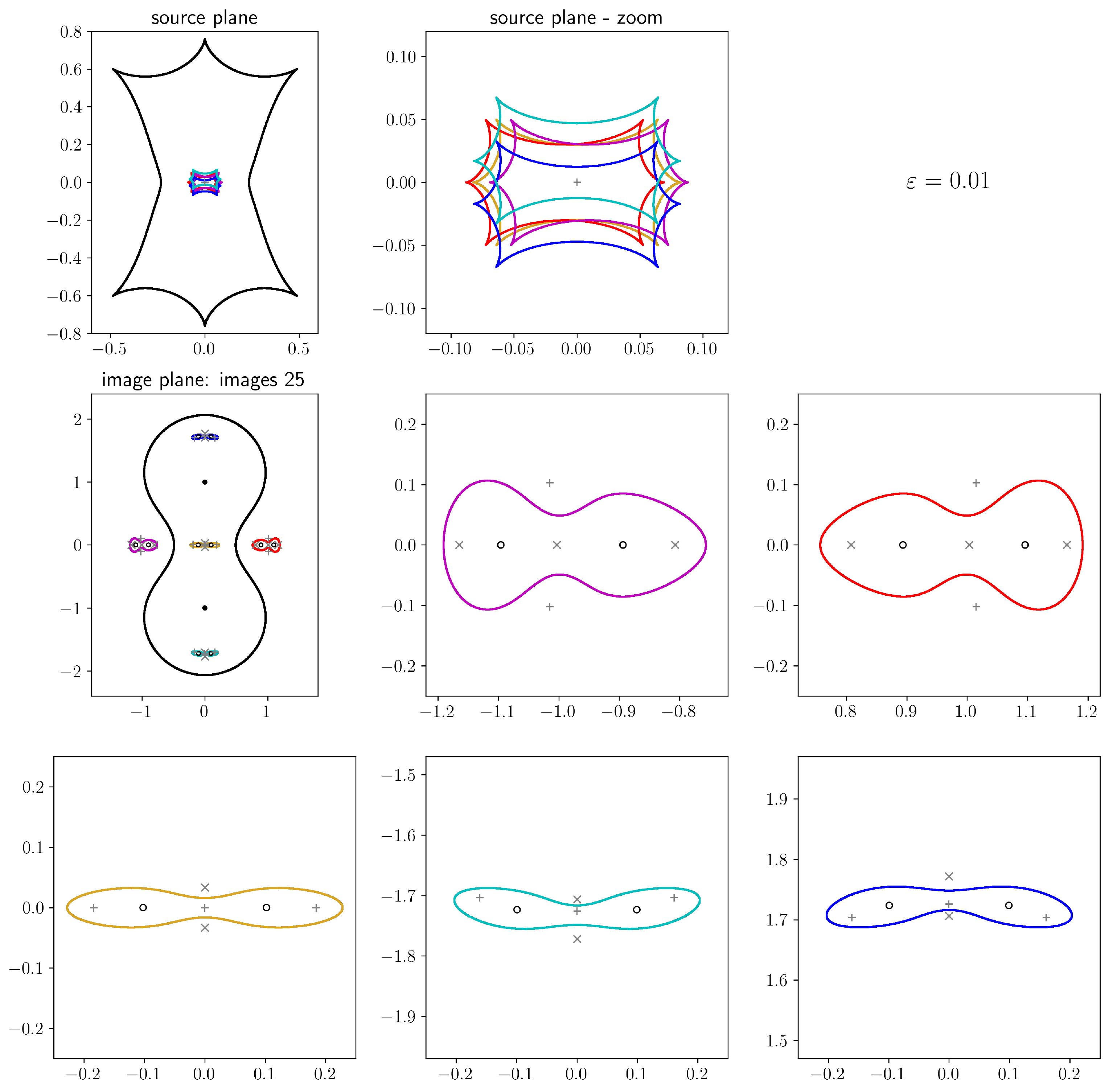}
\caption{
Similar to Fig.~\ref{fig:double-g2-eps0}, but now the two-plane scenario has $\e=0.01$.
Here it is apparent that the central caustic is in fact five distinct curves; the colors of the caustic curves match those of the corresponding critical curves.
}\label{fig:double-g2-eps0.01}
\end{figure}


\begin{figure}
    \includegraphics[width=0.8\textwidth]{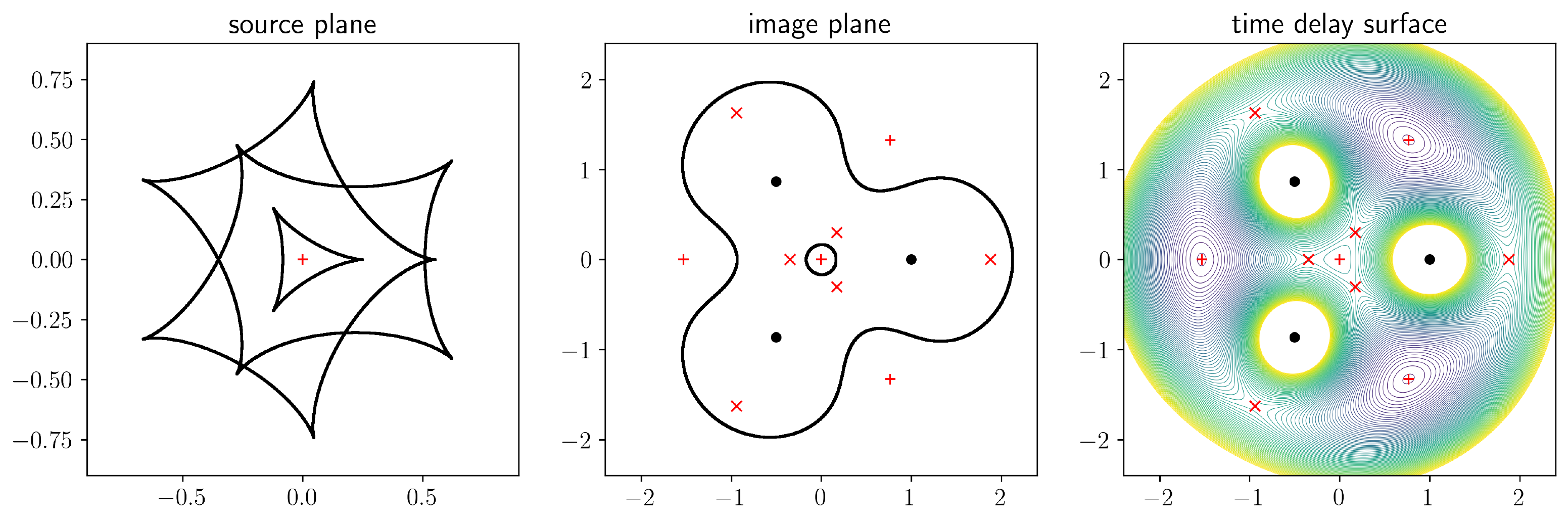}
\caption{
Similar to Fig.~\ref{fig:single-g2}, but for a Rhie ensemble with $g=3$.
The masses all have $b=1$ and are placed on the unit circle.
}\label{fig:single-Rhie3}
\end{figure}

\begin{figure}
    \includegraphics[width=0.8\textwidth]{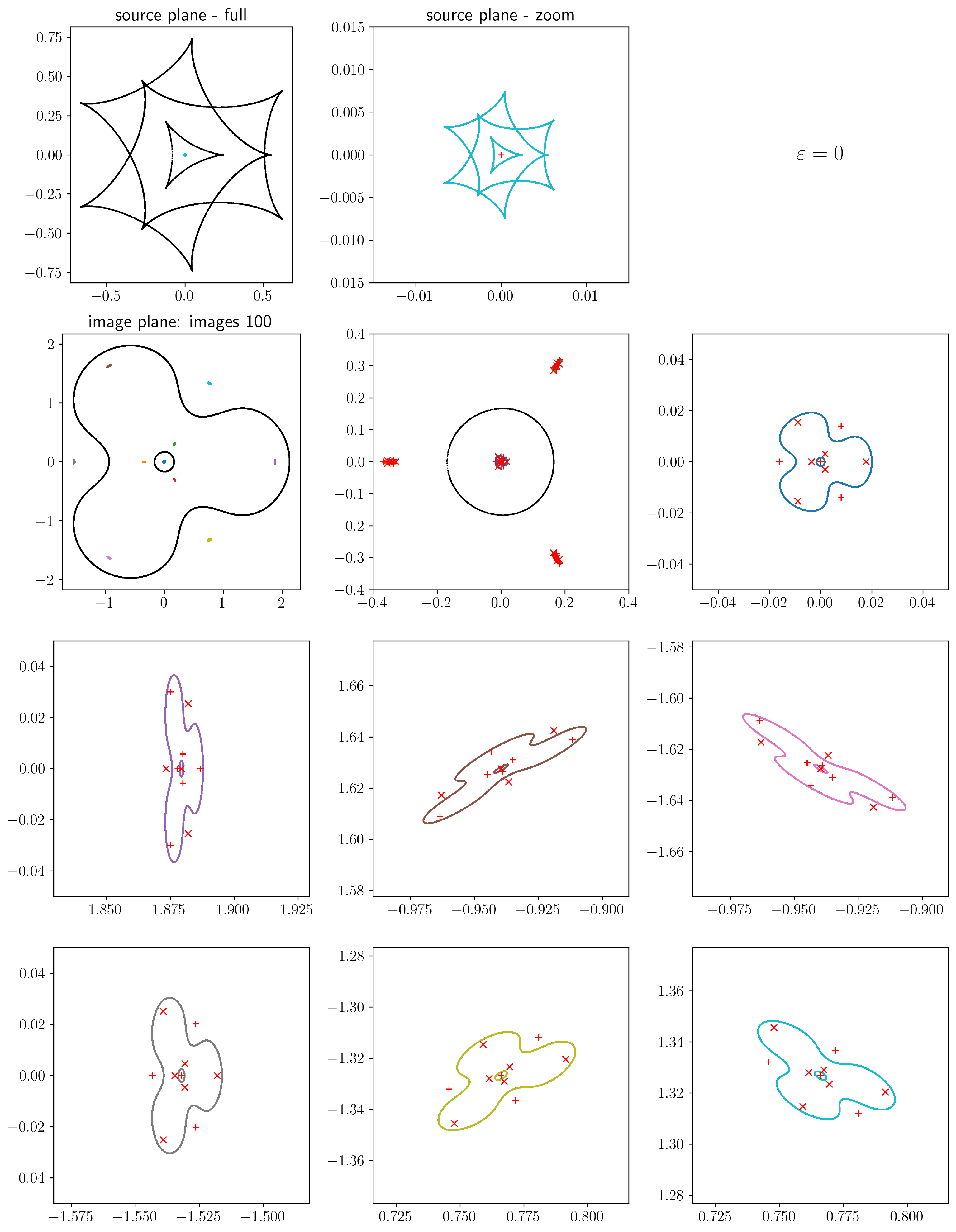}
\caption{
Results for a two-plane scenario with $g=3$ Rhie ensembles in each plane.
The background plane is scaled by $\lambda=0.01$.
Here the two-plane scenario has $\e=0$.
In the top row, the left panel shows the full caustics while the middle panel shows a close-up of the small caustics near the origin.
In the second row, the left panel shows the full critical curves, while the middle panel shows a close near the origin.
The remaining panels show further zooms centered on some (but not all) of the individual small critical curves.
Here there is a total of 100 lensed images comprised of 10 groups of 10.
}\label{fig:double-Rhie3-eps0}
\end{figure}

\begin{figure}
    \includegraphics[width=0.8\textwidth]{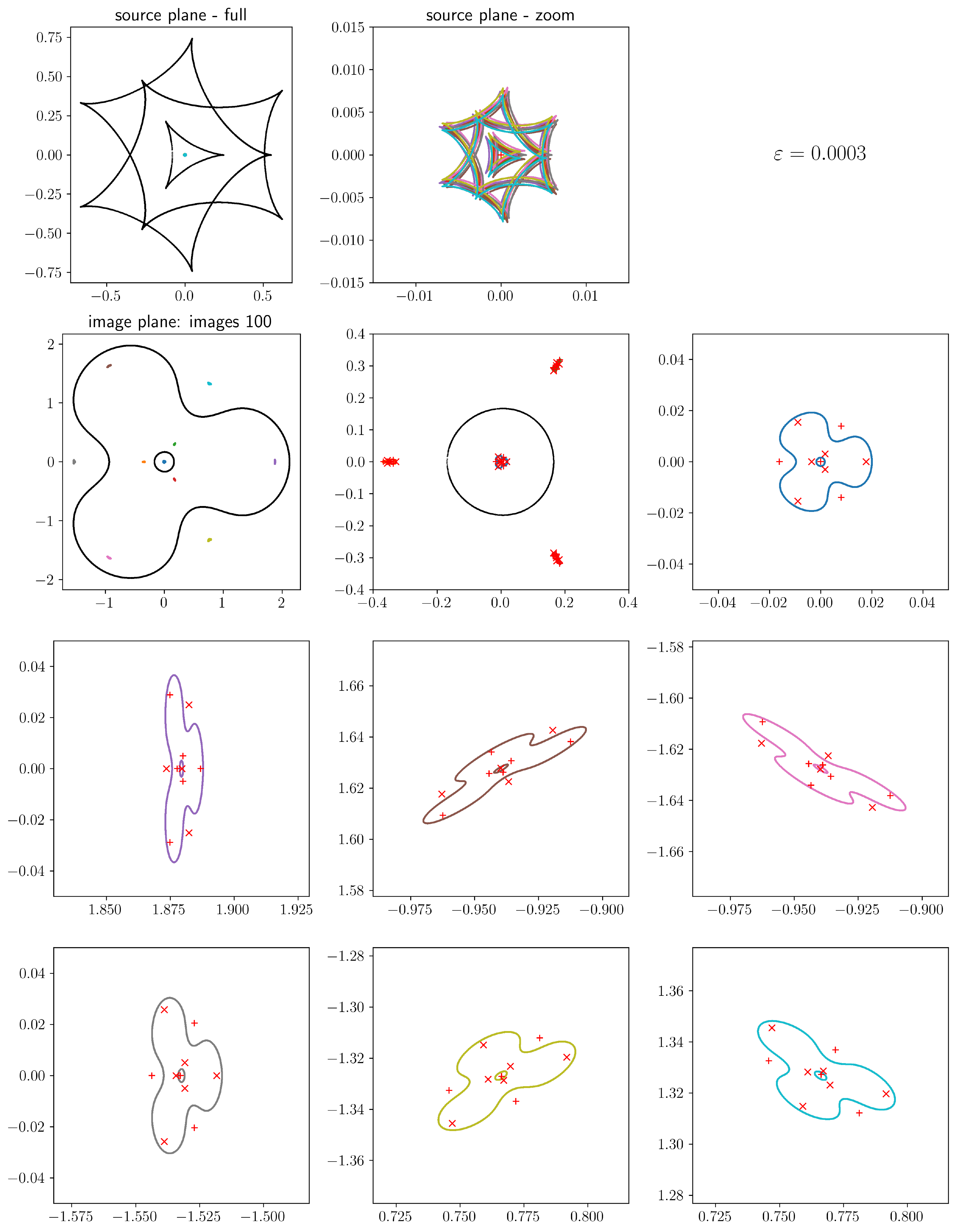}
\caption{
Similar to Fig.~\ref{fig:double-Rhie3-eps0}, but now the two-plane-scenario has $\e=0.0003$.
There are still 100 lensed images.
}\label{fig:double-Rhie3-eps0.0003}
\end{figure}

\begin{figure}
    \includegraphics[width=0.8\textwidth]{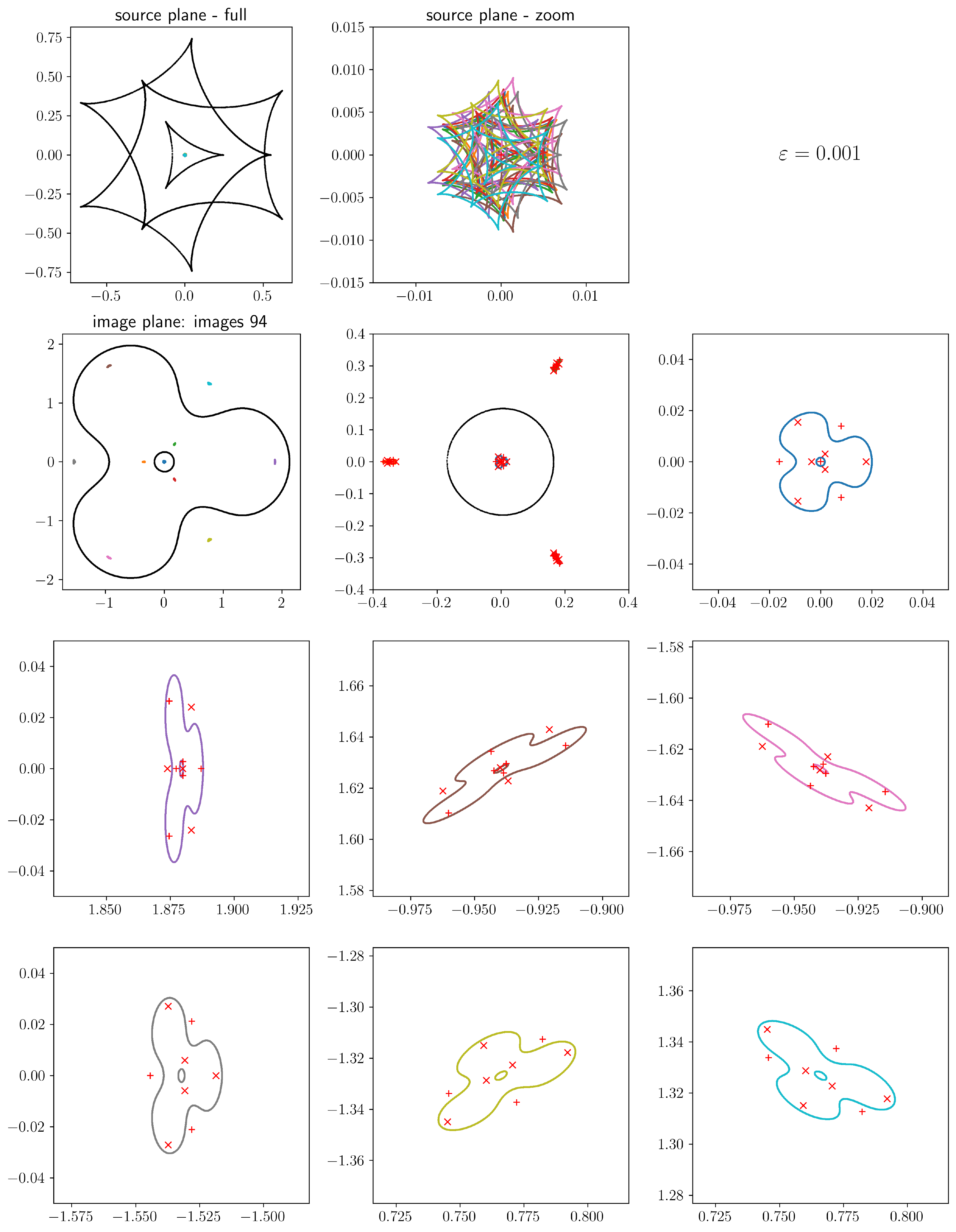}
\caption{
Similar to Fig.~\ref{fig:double-Rhie3-eps0}, but now the two-plane-scenario has $\e=0.001$.
This example has 94 lensed images.
In the bottom three panels, each group contains 8 images (compared with 10 in the other groups) because the associated caustic curve has shifted enough to make an image pair disappear.
}\label{fig:double-Rhie3-eps0.001}
\end{figure}

In this section, we present numerical results illustrating the above construction in some particular examples.

In addition to showing the positions of lensed images, we also display numerical plots of critical curves and caustics whose definitions we now recall (see \cite{PetBook} for a detailed exposition).
First, recall that the \emph{lensing map} $\eta: \R^2 \rightarrow \R^2$ associated to the system \eqref{eq:systemgen} takes a position $x_1$ in the first lens plane and maps it to a position $\eta(x_1)$ in the source plane obtained by tracing a light ray backward from observer to source (while accounting for the deflections in passing each lens plane).
Mathematically $\eta(x_1)$ can be expressed by solving for $x_2 = x_1 - \alpha(x_1)$ in the first equation in \eqref{eq:systemgen} and substituting this into the second equation, and then recursively expressing each $x_j$ in terms of $x_1$ and substituting into the next equation eventually expressing the final equation in the form $\eta(x_1)=y$, where recall $y$ is the source position.  Then the left hand side $\eta(x_1)$ of this final equation is the desired expression for the lensing map.  In the case of point-mass lenses and viewing $x_1 \in \R^2$ as a complex variable, $\eta(x_1)$ can be simplified to a rational expression in $x_1$ and its complex conjugate $\overline{x_1}$. We also note that in Figures \ref{fig:single-g2} and \ref{fig:single-Rhie3}, which illustrate single-plane scenarios, in addition to critical curves and caustics, we have also illustrated level sets of the time delay function $T:\R^2 \rightarrow \R$, which is defined in the single-plane case as $T(x):=|x-y|^2/2 - \sum_{i=1}^g b_i^2 \log|x-\xi_i|$ and has gradient satisfying $\nabla T(x) = \eta(x) - y$ (in particular, lensed images are critical points of $T$).

Now we can define the critical curve associated to a lensing map $\eta$ to be the vanishing set $\{x_1 \in \R^2 : \det J_\eta(x_1) =0 \}$ of the Jacobian determinant of $\eta$, and the caustic is defined as the set of critical values, i.e., the image of the critical curve under the lensing map $\eta$.  Note that the critical curve resides in the image plane (the first lens plane which is the same as the input space of the lensing map), while the caustic resides in the source plane (the target space of the lensing map).

For context, we first present in Figure \ref{fig:single-g2} a numerical simulation of a single-plane lens with $g=2$ point masses.
A source at the origin produces 5 images as expected.

In Figure \ref{fig:double-g2-eps0}, we present numerical simulation of a nonphysical $\e=0$ two-plane example with two masses in each plane, and in Figure \ref{fig:double-g2-eps0.01} we show the result of perturbing this to a physically meaningful system with $\e=0.01$.
Both cases produce the expected 25 images in 5 clusters with 5 images each.

The caustic structure of lensing maps in the nonphysical $\e=0$ case can exhibit an interesting feature where multiple components of the critical set are mapped to the same caustic (one may informally think of this as an overlay of multiple caustics).
This feature is quite striking when illustrating the perturbative construction used in the proof of Theorem \ref{thm:examples}.
Namely, comparing the caustics for $\e=0$ and $\e=0.01$, we see a ``caustic of multiplicity five'' that separates into five distinct caustics each winding around the origin.

This caustic-of-multiplicity phenomenon is unrelated to the symmetry in the Rhie ensembles; it is actually a typical feature of lensing maps in the nonphysical $\e=0$ case.  Indeed, when $\e=0$ the lensing map takes the form of a composition
$x_1 \mapsto \varphi_2 \circ \varphi_1(x_1)$, and by the multivariable chain rule its Jacobian matrix takes the form of a (matrix) product:
\be
J_{\varphi_2}(\varphi_1(x_1)) \cdot J_{\varphi_1}(x_1).
\ee
The critical set is where the determinant of the Jacobian vanishes
\be
\det \left[ J_{\varphi_2}(\varphi_1(x_1)) \cdot J_{\varphi_1}(x_1)\right] = \det J_{\varphi_2}(\varphi_1(x_1)) \cdot \det J_{\varphi_1}(x_1)=0.
\ee
The critical set is then the union of the two zero sets $\{\det J_{\varphi_1}(x_1)=0\}$ and $\{\det J_{\varphi_2}(\varphi_1(x_1)) =0\}$.  The latter is the same as the preimage $\varphi_1^{-1} \left\{ \det J_{\varphi_2}(x_2)=0 \right\}$.  Since the map $\varphi_1$ is many-to-one, a single component of the zero set $J_{\varphi_2}(x_2)=0$ may give rise to multiple components in the preimage $\varphi_1^{-1} \left\{\det  J_{\varphi_2}(x_2)=0 \right\}$, but all of those critical set components are mapped to a common caustic by the lensing map $\varphi_2 \circ \varphi_1$.  Indeed, in the composition $\varphi_2 \circ \varphi_1$ the map $\varphi_1$ is applied first and this trivially sends the preimage set $\varphi_1^{-1} \left\{ \det J_{\varphi_2}(x_2)=0 \right\}$ to the set $\{ \det J_{\varphi_2}(x_2)=0\}$.

Next we consider $g=3$. Figure \ref{fig:single-Rhie3} shows the single-plane case, which produces 10 images.

Figure \ref{fig:double-Rhie3-eps0} then shows a two-plane example with $g=3$ masses in each plane, for the unphysical case $\e=0$.
A source at the origin produces 100 images in 10 clusters of 10 images.

Figures \ref{fig:double-Rhie3-eps0.0003} and \ref{fig:double-Rhie3-eps0.001} then show the result of perturbing this to physically meaningful systems with different values of $\e$.
Once again we see the caustic-of-multiplicity phenomenon for $\e=0$.
When $\e>0$, the individual caustics separate. For $\e=0.0003$, the perturbation is small enough that the lens still achieves 100 images. However, for $\e=0.001$ three of the caustic curves have shifted enough that three pairs of images disappear, leaving a total of 94 images (see the bottom row of Fig.~\ref{fig:double-Rhie3-eps0.001}).

\section*{Author Declarations}
\noindent {\bf Data availability statement.}
The data that support this study are 
available upon request.

\noindent {\bf Conflict of interests.}
The authors have no conflicts to disclose.

\acknowledgements
This collaboration was initiated at the AMS MRC (Mathematics Research Community) workshop ``The Mathematics of Gravity and Light'' (Summer 2018).  The authors thank the AMS (American Mathematical Society) and also the National Science Foundation for supporting this program. The second named author is partly supported by the Simons Foundation, under the grant 712397.  We would also like to thank the anonymous referee for a careful reading and helpful suggestions that improved the presentation. We note that the author names on the title page of the current paper are ordered alphabetically following the convention in Mathematics.

\section{Appendix}

\subsection{A discussion of the parameters $\beta$ and $\e$} \label{sec:param}

The parameters $\beta_i$ and $\e_i$ in the lens equation can be expressed in terms of distances between lens planes. In Euclidean geometry, all of the distances here are simple Euclidean distances. For astrophysical applications, we must introduce cosmological distances (see, for example, \cite{Hogg}). In the standard cosmological model, the expanding universe is described by the Friedmann-Lema\^itre-Robertson-Walker metric with mass density $\Omega_M$, cosmological constant $\Omega_\Lambda$, curvature parameter $\Omega_k = 1 - \Omega_M - \Omega_\Lambda$, and current expansion rate $H_0$ (the Hubble constant). In what follows, we scale all distances by the Hubble distance $D_H = c/H_0$ to simplify the notation, and because we only need distance ratios for the lens equation.

The \emph{line-of-sight comoving} distance between redshifts $z_1$ and $z_2$ is given by
$$
  d^C(z_1,z_2) = \int_{z_1}^{z_2} \frac{dz}{\sqrt{\Omega_M(1+z)^3 + \Omega_k(1+z)^2 + \Omega_\Lambda}}
$$
The corresponding \emph{transverse comoving distance} $d^M$ is related to $d^C$ by
$$
  d^M = \begin{cases}
\frac{1}{\sqrt{\Omega_k}} \sinh\left(\sqrt{\Omega_k}\,d^C\right) & \Omega_k>0 \\
d^C & \Omega_k=0 \\
\frac{1}{\sqrt{|\Omega_k|}} \sin\left(\sqrt{|\Omega_k|}\,d^C\right) & \Omega_k>0
  \end{cases}
$$
Finally, the \emph{angular diameter distance} is
$$
  d^A(z_1,z_2) = \frac{d^M(z_1,z_2)}{1+z_2}
$$
To simplify the notation, we write this as $d^A_{1,2}$. If the first index is omitted, the first plane is taken to be the observer: $d^A_i = d^A(0,z_i)$.

The lens equation is naturally written in terms of angular diameter distances. The parameter $\beta_i$ is
$$
  \beta_i = \frac{d^A_{i,i+1}}{d^A_{i+1}}
  = \frac{d^M_{i,i+1}}{d^M_{i+1}}
$$
For the second equality, we note that the multiplicative redshift factors cancel, so $\beta_i$ can be written as a ratio of transverse comoving distances.

The parameter $\e_i$ is
$$
  \e_i = \frac{d^A_{j}\,d^A_{j-1,j+1}}{d^A_{j-1,j}\,d^A_{j+1}} - 1
  = \frac{d^M_{j}\,d^M_{j-1,j+1}}{d^M_{j-1,j}\,d^M_{j+1}} - 1
$$
(Note that $\e_1 = 0$.) Once again we note that the multiplicative redshift factors cancel so $\e_i$ can be written in terms of transverse comoving distances. Observational evidence suggests that our universe is spatially flat ($\Omega_k=0$). In such a universe, transverse comoving distances add in a simple way: $d^M_{i,j} = d^M_j - d^M_i$. Thus, if we assume a flat universe then we can simplify $\e_i$ to
$$
  \e_i = \frac{d^M_{i-1}\,d^M_{i,i+1}}{d^M_{i-1,i}\,d^M_{i+1}}
$$

Consider the special case of two lens planes. The first plane has
$$
  \e_1 = 0
  \qquad\mbox{and}\qquad
  \beta_1 = \frac{d^M_{1,2}}{d^M_2}
$$
The second plane has
$$
  \e_2 = \frac{d^M_1\,d^M_{2,3}}{d^M_{1,2}\,d^M_3}
  \qquad\mbox{and}\qquad
  \beta_2 = \frac{d^M_{2,3}}{d^M_3}
$$

\subsection{A note about scaling} \label{sec:scalings}

Including constants, the bending term that appears in the lens equation is
$$
  \beta_i \alpha_i(x_i) = \frac{d^M_{i,i+1}}{d^M_i} \sum_\ell \frac{G M_{i,\ell}}{c^2 d^A_i} \frac{x_i - \xi_{i,\ell}}{|x_i - \xi_{i,\ell}|^2}
$$
The factor $d^A_i$ is the angular diameter distance to plane $i$, and it serves to convert between the angular coordinates that are naturally used for the lens plane (namely $x_i$ and $\xi_{i,\ell}$) and the physical coordinates that enter the expression for the bending angle. The key scaling is
$$
  \frac{d^M_{i,i+1}\,M_{i,\ell}}{d^A_i d^M_{i+1}}
$$
For the two-plane scenario, the scalings for the two planes are
$$
  \frac{d^M_{1,2}\,M_{1,\ell}}{d^A_1 d^M_{2}}
  \qquad\mbox{and}\qquad
  \frac{d^M_{2,3}\,M_{2,\ell}}{d^A_2 d^M_{3}}
$$
In order to have $\e_2 \to 0$, we must have one of two cases:
\begin{itemize}

\item $d^M_1 \to 0$, which also implies $d^A_1 \to 0$ (since $d^A_1 = d^M_1/(1+z_1)$ where $z_1$ is the redshift of plane 1; note that $z_1 \to 0$ if the distances go to 0). In order for the bending term to remain constant, we must have $M_{1,\ell} \to 0$ such that $M_{1,\ell}/d^A_1 =$ const.

\item $d^M_{2,3} \to 0$. In order for the bending term to remain constant in this case, we must have $M_{2,\ell} \to \infty$ such that $d^M_{2,3}\,M_{2,\ell} =$ const.

\end{itemize}

\subsection{The determinant of a block upper triangular matrix} \label{sec:JND}

The following elementary result is most likely classical as it is a consequence (by a simple inductive proof) of the classical formula (see, for example, \cite{SylvesterBM}) 
\be\label{eq:detblock}
\det \left(\begin{bmatrix}
A & B \\
C & D\\
\end{bmatrix}\right) = \det(A - BD^{-1}C) \det(D)
\ee 
for the determinant of a block matrix, where $D$ is assumed to be invertible.
We include a proof of the lemma for the sake of completeness.

\begin{lemma} \label{lem:matrix}
Let $\mathcal{J}$ be an upper triangular block matrix, that is a matrix of the form 
$$
\mathcal{J}=
\begin{bmatrix}
J_1 & I_{1,2} & I_{1,3} & \cdots & I_{1,k} \\
0 & J_2 & I_{2,3} & \cdots & I_{2,k} \\
0 & 0 & J_3 & \cdots & I_{3,k} \\
\vdots & \vdots & \vdots & \ddots &\vdots \\
0 & 0 & 0 & \cdots & J_k 
\end{bmatrix}
$$
where each $J_i$ is an invertible $n_i\times n_i$ matrix, $I_{i,j}$ is an arbitrary $n_i \times n_j$ matrix, and each $0$ is an appropriately sized matrix of 0's. Then the determinant of $\mathcal{J}$ satisfies
\be\label{eq:detblockJ}
\det(\mathcal{J}) = \prod_{i=1}^k \det(J_i).
\ee
\end{lemma}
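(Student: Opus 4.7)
The plan is to prove Lemma \ref{lem:matrix} by induction on the number $k$ of diagonal blocks, using the classical block-determinant identity \eqref{eq:detblock} cited just before the statement as the engine of the inductive step. The base case $k=1$ is immediate, since then $\mathcal{J} = J_1$ and the identity \eqref{eq:detblockJ} reads $\det(\mathcal{J}) = \det(J_1)$.

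For the inductive step, assume the result for $k-1$ blocks, and consider $\mathcal{J}$ with $k$ diagonal blocks. I would partition $\mathcal{J}$ into a $2\times 2$ block form
\begin{equation}
\mathcal{J} = \begin{bmatrix} A & B \\ 0 & J_k \end{bmatrix},
\end{equation}
where $A$ is the upper-left $(n_1+\cdots+n_{k-1}) \times (n_1+\cdots+n_{k-1})$ block, itself an upper triangular block matrix with diagonal blocks $J_1,\ldots,J_{k-1}$, the matrix $B$ collects the blocks $I_{i,k}$ in its rightmost column, and the lower-left block is identically zero. Since $J_k$ is invertible by hypothesis, the formula \eqref{eq:detblock} applies and yields
\begin{equation}
\det(\mathcal{J}) \;=\; \det\!\bigl(A - B\,J_k^{-1}\cdot 0\bigr)\,\det(J_k) \;=\; \det(A)\,\det(J_k).
\end{equation}
Applying the inductive hypothesis to $A$ then gives $\det(A) = \prod_{i=1}^{k-1} \det(J_i)$, and multiplying by $\det(J_k)$ completes the induction.

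There is no real obstacle here; the only place one must be careful is in confirming that the hypotheses of \eqref{eq:detblock} are met, namely that the bottom-right diagonal block used in the splitting is invertible. That hypothesis is built into the statement of the lemma (each $J_i$ is assumed invertible), so the splitting can be performed at every inductive step. As a sanity check, one could alternatively note that when some $J_i$ is singular the conclusion \eqref{eq:detblockJ} still holds by a continuity argument (perturb the $J_i$ to make them invertible, apply the invertible case, and let the perturbation tend to zero), but this is not needed for the statement as formulated.
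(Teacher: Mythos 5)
Your proof is correct and follows essentially the same route as the paper: induction on the number $k$ of diagonal blocks with the classical identity \eqref{eq:detblock} driving the inductive step. The only (harmless) difference is that you peel off the \emph{last} diagonal block, so the invertibility needed to apply \eqref{eq:detblock} is that of $J_k$ itself, which is immediate from the hypotheses, whereas the paper peels off the first block and takes $D$ to be the remaining $(k-1)$-block matrix, whose invertibility follows from the inductive hypothesis.
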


\begin{proof}
Applying the formula \eqref{eq:detblock} for the determinant of a block matrix, we first obtain 
$$
\det \left( \begin{bmatrix}
J_1 & I_{1,2} \\
0 & J_2\\
\end{bmatrix} \right)
= \det (J_1 - I_{1,2} J_2^{-1} 0 ) \det(J_2) = \det(J_1) \det(J_2),
$$
which establishes the base case $k=2$ for proving \eqref{eq:detblockJ} by induction on $k$, the number of diagonal blocks. Assume (for the inductive step) that the above formula is true for block matrices in the above form with less than $k$ diagonal blocks. Let $\mathcal{J}$ be as in the statement of the lemma, let 
$$
J=
\begin{bmatrix}
J_2 & I_{2,3} & \cdots & I_{2,k} \\
0 & J_3 & \cdots & I_{3,k} \\
\vdots & \vdots & \ddots &\vdots \\
 0 & 0 & \cdots & J_k 
\end{bmatrix},
$$
and let $I^* = [I_{1,2}, ... , I_{1,k}].$ Then we can write
$$
\mathcal{J} = 
\begin{bmatrix}
J_1 & I^* \\
0 & J 
\end{bmatrix},
$$
and another application of  \eqref{eq:detblock} gives $\det(\mathcal{J}) = \det(J_1 - I^*J^{-1} 0) \det(J) = \det(J_1)\det(J)$. By the inductive hypothesis we also have $\det(J)=\prod_{i=2}^k \det(J_i)$, and the desired result \eqref{eq:detblockJ} follows, completing the inductive step.
\end{proof}



\bibliography{grav}
\end{document}